\newcommand{\nz}[1]{%
  \IfStrEq{#1}{0}{}{\IfStrEq{#1}{0.0}{}{\IfStrEq{#1}{-0}{}{\IfStrEq{#1}{-0.0}{}{#1}}}}}
\renewcommand\nomgroup[1]{%
  \item[\bfseries
  \ifstrequal{#1}{A}{Acronyms}{%
  \ifstrequal{#1}{I}{Indices for the RCPSP}{%
  \ifstrequal{#1}{S}{Sets \& Tuples for the RCPSP}{%
  \ifstrequal{#1}{D}{Decision Variables for the RCPSP}{%
  \ifstrequal{#1}{P}{Parameters for the RCPSP}{%
  \ifstrequal{#1}{SM}{Sets \& Tuples for MBSE}{%
  \ifstrequal{#1}{PM}{Parameters for MBSE}{%
  \ifstrequal{#1}{Z}{Indices for the HFNMCF}{%
  \ifstrequal{#1}{Y}{Sets, Tuples, Functions for the HFNMCF}{%
  \ifstrequal{#1}{W}{Decision Variables for the HFNMCF}{%
  \ifstrequal{#1}{V}{Parameters for the HFNMCF}{%
  \ifstrequal{#1}{DO}{Other Decision Variables}{}}}}}}}}}}}}%
]}
\declaretheoremstyle[%
  headfont=\bfseries,%
  headpunct={:},%
  notefont=\normalfont\bfseries,%
  notebraces={--~}{},
    qed=$\blacksquare$,
]{definitionstyle}
\theoremstyle{definition}
\declaretheorem[style=definitionstyle,name=Definition]{defn}
\declaretheorem[style=definitionstyle,name=Theorem]{thm}
\theoremstyle{definition}
\theoremstyle{plain}
\newtheorem*{rem}{Remark}
\newcommand{\liinesbigfig}[4]{\begin{figure*}[ht!]\begin{center}\includegraphics[width=#4\linewidth]{#1}\vspace
{-0.1in}\caption{#2}\label{#3}\end{center}\vspace{-0.2in}\end{figure*}}
\begin{document}
%
\title{Extending Resource Constrained Project Scheduling to Mega-Projects with Model-Based Systems Engineering \& Hetero-functional Graph Theory}

\author[1]{Amirreza Hosseini }
\author[1]{Amro M. Farid}
\affil[1]{Department of Systems Engineering, Charles V. Schaefer, Jr. School of Engineering and Science, Stevens Institute of Technology, Hoboken, NJ, USA}


\date{Fall 2025}
\maketitle

\begin{abstract}
Within the project management context, project scheduling serves as an indispensable component, functioning as a fundamental tool for planning, monitoring, controlling, and managing projects more broadly. Although the resource-constrained project scheduling problem (RCPSP) lies at the core of project management activities, it remains largely disconnected from the broader literature on model-based systems engineering (MBSE), thereby limiting its integration into the design and management of complex systems. The original contribution of this paper is twofold. First, the paper seeks to reconcile the RCPSP with the broader literature and vocabulary of model-based systems engineering and hetero-functional graph theory (HFGT). A concrete translation pipeline from an activity-on-node network to a SysML activity diagram, and then to an operand net is constructed. Using this representation, it specializes the hetero-functional network minimum-cost flow (HFNMCF) formulation to the RCPSP context as a systematic means of HFGT for quantitative analysis and proves that the RCPSP is recoverable as a special case of a broader model. Secondly, on an illustrative instance with renewable and non-renewable operands, the specialized HFNMCF, while producing similar schedules, yields explicit explanations of the project states that enable richer monitoring and control. Overall, the framework preserves the strengths of the classical RCPSP while accommodating real-world constraints and enterprise-level decision processes encountered in large, complex megaprojects.
\end{abstract}

%

\vspace{-0.1in}
\section{Introduction}
Project management (PM) involves the application of knowledge, skills, tools, and techniques to project activities to plan, organize, coordinate, and control them toward achieving project objectives \cite{DING:2023:00}. Within this context, project scheduling (PS) serves as an indispensable component, functioning as a fundamental tool for planning, monitoring, controlling, and, more broadly, managing projects \cite{pellerin:2019:00}. The presence of scarce resources and precedence relationships among activities makes project scheduling inherently complex \cite{hartman:2011:00}. A well-established formalization that captures these interdependencies is the resource-constrained project scheduling problem (RCPSP), which has become a standard model in the project scheduling literature \cite{hartmann:2022:00}. The RCPSP has broad practical relevance, with applications across engineering and construction, manufacturing, software development, and logistics. Moreover, it has been proven to be NP-hard, underscoring its computational complexity \cite{blazewicz:1983:00,hartmann:2022:00, DING:2023:00}. While the RCPSP already provides a robust modeling framework, numerous extensions have been proposed to accommodate the diverse conditions encountered in practice. Since the 1990s, it has also served as a foundation for the development and testing of various mathematical and heuristic algorithms aimed at addressing its computational challenges \cite{artigues:2025:00}.

\nomenclature[A]{PM}{Project Management}
\nomenclature[A]{PS}{Project Scheduling}
\nomenclature[A]{RCPSP}{Resource Constrained Project Scheduling Problem}

Although the RCPSP lies at the core of project management activities, it remains largely disconnected from the broader literature on model-based systems engineering (MBSE), thereby limiting its integration into the design and management of complex enterprises. This lack of alignment presents a significant challenge, as it prevents the seamless incorporation of optimal project scheduling into mega-project management where enterprise architecture is integral to project delivery.  Without a cohesive framework that integrates project scheduling with MBSE and enterprise architecture, project scheduling remains an isolated consideration rather than an integral component of the engineering management of meega-projects.  This disconnect leads to missed opportunities for identifying and addressing scheduling trade-offs during the critical early phases of system planning-phases in which decisions have the greatest impact on project outcomes. As highlighted by the ``Iron Law of Project Management," more than 99.5\% of projects fail to meet their original cost, schedule, and benefit targets \cite{flyvbjerg:2014:00, Hosseini:2025:ISC-J55}. Furthermore, the inability to jointly plan and schedule projects alongside technical, operational, and economic objectives often results in fragmented decision-making, wherein RCPSP-derived schedules may conflict with broader system-level goals. Ultimately, this fragmentation hinders the development of holistic solutions that effectively balance project scheduling with the complex and multifaceted demands of modern systems, limiting progress toward achieving on-time, on-budget, and value-driven project delivery.

\nomenclature[A]{MBSE}{Model-based Systems Engineering}

\vspace{-0.1in}
\subsection{Original Contribution}
The original contribution of this paper is twofold.  First, the paper seeks to reconcile the resource-constrained project scheduling problem with the broader literature and vocabulary of model-based systems engineering\cite{Delligatti2013SysMLDA} and hetero-functional graph theory\cite{Schoonenberg:2019:ISC-BK04, Farid:2022:ISC-J49, Farid:2025:ISC-JR06}. It does so by providing a constructive translation from an Activity-on-Node network to a SysML activity diagram, and then to an operand net, and by specializing the hetero-functional network minimum-cost flow problem to the RCPSP context. Doing so reveals the specific limiting conditions upon which model-based systems engineering (MBSE) and hetero-functional graph theory (HFGT) collapse to project scheduling.  Therefore, this paper proves that model-based systems engineering and hetero-functional graph theory are a formal generalization of the resource-constrained project scheduling problem.  Consequently, this paper discusses how the resource-constrained project scheduling problem can be extended when these specific limiting conditions are relaxed.  In particular, it shows how MBSE and HFGT address projects and megaprojects that:
\begin{itemize}
    \item involve inherent complexity, including intricate precedence relationships and dependencies,
    \item require the management of multiple resource types, each subject to distinct and potentially restrictive constraints,
    \item and are amenable to systematic extension through comprehensive taxonomies, enabling the incorporation of a wide range of real-world conditions and project scenarios.
\end{itemize}
Secondly, the paper demonstrates how model-based systems engineering and hetero-functional graph theory may be used to enhance project scheduling for the complex real-life constraints found in mega-projects.  

\nomenclature[A]{HFGT}{Hetero-functional Graph Theory}

\vspace{-0.1in}
\subsection{Paper Outline}
The remainder of the paper proceeds as follows.  Section \ref{Sec:Background} provides preliminary background on the resource-constrained project scheduling problem, model-based systems engineering, and hetero-functional graph theory.  Section \ref{Sec:Reconciliation} then reconciles these three modeling and analysis techniques.  Ultimately, it shows that MBSE and HFGT are a formal generalization of the resource-constrained project scheduling problem.  Sec. \ref{Sec:Discussion} then discusses how this MBSE and HFGT can be used to account for the complex real-life constraints found in mega-projects.  Finally, Sec. \ref{Sec:Conclusion} brings the work to a conclusion.  

\vspace{-0.1in}
\section{Background}\label{Sec:Background}

In order to support the analytical discussion in the following sections, this section provides preliminary background on the resource-constrained project scheduling problem in Sec. \ref{Sec:RCPSP-Background}, on model-based systems engineering in Sec. \ref{Sec:MBSEIntro}, and on hetero-functional graph theory in Sec. \ref{Sec:HFGTIntro}.  The reader is referred to the Nomenclature section in the Appendix for a transparent listing of all mathematical symbols and their meanings as they pertain to each of these subsections.

\vspace{-0.1in}
\subsection{Resource-constrained Project Scheduling Problem}\label{Sec:RCPSP-Background}
In general, the resource-constrained project scheduling problem assumes an Activity-on-Nodes project network $G=\{{\cal V},{\cal A}\}$ where the vertices ${\cal V}$ represent activities and the directed arcs ${\cal A}$ represent precedence between them.  It includes three fundamental constraints among project activities: precedence constraints, task durations, and competition for limited resources \cite{hartmann:2022:00}. Formally, the problem involves determining a feasible assignment of non-preemptive start times to a set of activities ${\cal V}=\{{\cal v}_1, \ldots, {\cal v}_{n}\}$.  Additionally, a dummy start activity ${\cal v}_o$ and dummy end activity ${\cal v}_f$ are added to create an augmented activity set ${\cal V}_A=\{{\cal v}_o, {\cal V}, {\cal v}_f\}$.  Each activity requires one or more resources $s_{\cal l} \in S_{\cal R}$, each with its availability $C_{\cal lR}$.  The objective is to minimize the project makespan; the earliest possible completion of the terminal activity $v_f$. Due to its combinatorial complexity, the RCPSP is strongly NP-hard \cite{blazewicz:1983:00}. Over time, numerous extensions have been proposed to address practical real-world conditions, including preemptive scheduling \cite{moukrim:2015:00,shahabi:2024:00}, time-varying resource demands \cite{hartmann:2014:00, pouramin:2024:00}, setup and transfer times \cite{ma:2022:00, dashti:2025:00}, multi-modal execution with renewable and nonrenewable resources \cite{yang:2024:00, jarboui:2008:00, peng:2025:00}, and trade-offs in time, cost, and quality \cite{yazdani:2024:00, maqsoodi:2023:00, polancos:2024:00}. Temporal and logical extensions encompass generalized precedence relations \cite{goudarzi:2024:00, karnebogen:2024:00}, time windows, overlapping activity logic, and conditional or optional activity networks \cite{yu:2025:00}. Additionally, models have expanded to incorporate diverse resource structures such as cumulative, multi-skilled, continuous, or partially renewable resources \cite{goudarzi:2024:00, polancos:2024:00, maghsoudlou:2017:00, karnebogen:2024:00}.  Objectives in some works are robustness, resource investment, and financial performance measures like net present value, along with algorithm hybridization and development \cite{hussain:2024:00,qiu:2025:00, Liu:2025:00, Balouka:2016:00, mohammadagha:2025:00, Peng:2023:00}. This rich body of extensions has transformed the RCPSP from a stylized academic construct into a flexible framework capable of capturing the complexities of real-world mega-projects.

\nomenclature[A]{AoN}{Activity-on-Node}
\nomenclature[S]{${\cal V}$}{Set of AoN network vertices}
\nomenclature[S]{${\cal a_{ij} \in A}$}{Set of AoN network edges(arcs)}
\nomenclature[S]{${\cal V}_A$}{Set of RCPSP activities}
\nomenclature[S]{${\cal v_o}$}{Dummy start activity}
\nomenclature[S]{${\cal v_f}$}{Dummy terminal activity}
\nomenclature[S]{$s_{\cal l} \in S_{\cal R}$}{RCPSP Resources (Required Operands)}
\nomenclature[P]{$C_{\cal {lRr}}$}{Capacity of renewable resource ${\cal l}$}
\nomenclature[P]{$C_{\cal {lRn}}$}{Capacity of non-renewable resource ${\cal l}$}

Given this vast diversity of variations, the RCPSP is best viewed as a class of problems rather than a single optimization problem. To that end, a taxonomy of RCPSP problems has been developed as part of a larger taxonomy of machine scheduling problems, which consists of three fields: $\alpha|\beta|\gamma$~\cite{demeulemeester:2002:00, Brucker:2007:00,kolisch:2013:00}. Similar to machine scheduling, $\alpha$ describes the machine/resource environment, $\beta$ characterizes the activities, and $\gamma$ specifies the performance measures.
\nomenclature[P]{$\alpha$}{Machine/resource environment of scheduling problems}
\nomenclature[P]{$\beta$}{Activity/job characteristic of scheduling problems}
\nomenclature[P]{$\gamma$}{Optimality criteria and performance measure of scheduling problems}

In the alpha branch of the machine scheduling taxonomy, the $\alpha$-field can take at most four values. First, $\alpha_1$ describes the structural resources of the production process. As noted in \cite{demeulemeester:2002:00, Brucker:2007:00}, the RCPSP requires $\alpha_1=\emptyset$ and the remaining characteristics of the machine/resource environment are specified in the remaining $\alpha$ fields. Second, $\alpha_2$ describes the number of resources ($\alpha_2 \in \{o,1,m\}$). Third, $\alpha_3$ denotes the specific resource types ($\alpha_3 \in \{o, 1, T, 1T, v, \chi, \sigma\}$). This field provides information on whether resources are renewable, cumulative, spatial, or other types as elaborated in \cite{demeulemeester:2002:00, Brucker:2007:00}. Finally, $\alpha_4$ describes the resource availability ($\alpha_4 \in \{o,k,va,\textbf{a}, \tilde{a}, \textbf{va}, v\tilde{a}\}$). This field indicates whether the resource availability is arbitrary, constant, fuzzy, variable, stochastic, etc. The interested reader is referred to  \cite{demeulemeester:2002:00, Brucker:2007:00} for further information.  

In the beta branch of the taxonomy, the second field, $\beta$, specifies the activity characteristics of a project scheduling problem. It can incorporate up to nine elements:
\begin{enumerate}
    \item $\beta_1$ indicates the possibility of activity preemption ($\beta_1 \in \{o, \text{pmtn}, \text{pmtn-rep}\}$). Preemption means that an activity can be interrupted and resumed later \cite{demeulemeester:2002:00, Brucker:2007:00}.
    \item $\beta_2$ indicates the precedence relations. This field may include strict finish-start (FS) constraints, as found in CPM and PERT networks, and is denoted by $\beta_2 = cpm$.  It also includes different types of generalized rules and probabilistic networks \cite{demeulemeester:2002:00,Brucker:2007:00}.
    \item $\beta_3$ describes the ready time before each activity can start ($\beta_3 \in \{o, \rho_j, \boldsymbol{\rho_j},\tilde{\rho_j}\}$)\cite{demeulemeester:2002:00,Brucker:2007:00}.
    \item $\beta_4$ specifies the duration of project activities ($\beta_4 \in \{o, \text{cont}, (d_j = d), \boldsymbol{d_j}, \tilde{d_j}\}$). Activity durations may be integer, continuous, or stochastic\cite{demeulemeester:2002:00,Brucker:2007:00}.
    \item $\beta_5$ indicates whether there are deadlines on activity $j$ or the overall project ($\beta_5 \in \{o, \delta_j,\delta_n\}$)\cite{demeulemeester:2002:00,Brucker:2007:00}.
    \item $\beta_6$ describes the resource requirements of activities ($\beta_6 \in \{o, k, \text{vr}, \text{disc}, \text{cont}, \text{int}, \text{conc}, \text{conv}, \text{lin}\}$)\cite{demeulemeester:2002:00,Brucker:2007:00}.
    \item $\beta_7$ indicates whether activities can be executed in multiple modes ($\beta_7 \in \{o, \text{mu}, \text{id}\}$)\cite{demeulemeester:2002:00,Brucker:2007:00}.
    \item $\beta_8$ describes the financial aspects of activities. This includes whether cash flows exist in the scheduling problem, and whether they are activity-based, time-based, and fuzzy, variable, or deterministic \cite{demeulemeester:2002:00,Brucker:2007:00}.
    \item $\beta_9$ denotes whether there are changeover times and whether they are stochastic, fuzzy, or deterministic ($\beta_9 \in \{ o, s_{jk}, \boldsymbol{s_{jk}}, \tilde{s}_{jk}\}$)\cite{demeulemeester:2002:00,Brucker:2007:00}.
\end{enumerate}

In the gamma branch of the taxonomy, the third field, $\gamma$, describes the optimality criteria and elaborates on the objective function of the problem. In general, objectives can be categorized into two main types: \begin{enumerate*}
    \item early completion measures, which aim to minimize the project makespan in various forms, and
    \item free completion measures, which do not seek to minimize the project duration but instead focus on other goals such as net present value (NPV).
\end{enumerate*} For example, $\gamma = C_{max}$ denotes a makespan minimization objective; $\gamma = T_{max}, L_{max}$ represent the minimization of tardiness and lateness, respectively; and $\gamma = \sum \text{sq.dev}$ indicates the objective of minimizing the sum of squared deviations of resource usage from the average.

\liinesbigfig{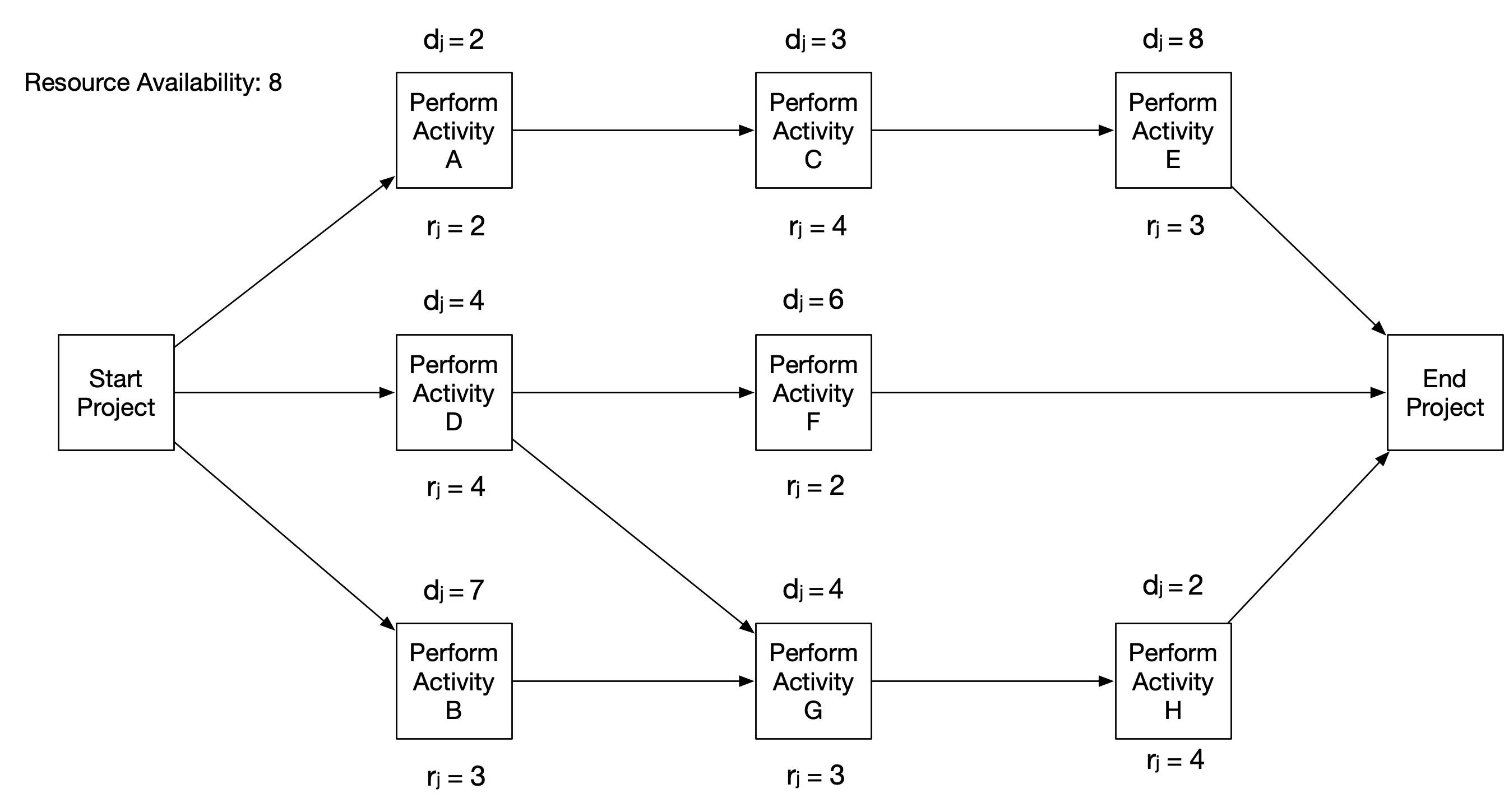}{An Example AoN Project Network for the $\emptyset,m,1|cpm|C_{max}$ variant of the RCPSP problem. \cite{demeulemeester:2002:00}}{fig:SampleProjectNetwork}{0.9}

To facilitate the remainder of the discussion, the project \cite{demeulemeester:2002:00} shown in Fig.~\ref{fig:SampleProjectNetwork} is chosen as a motivational example. The presented activity-on-nodes (AoN) network $G=\{{\cal V},{\cal A}\}$ depicts each activity ${\cal v}_{\cal i}$ as a box, containing the activity identifier, with its duration $d_{\cal i}$ indicated above the box and its resource requirements ${\cal r}_{il}$ specified below.  Activities A and J represent dummy activities, denoting the project's start and finish, respectively, each with a duration of zero and zero resource requirements.  Two variants of the Resource-Constrained Project Scheduling Problem (RCPSP) are treated in this work:  a.) with renewable resources and b.) with non-renewable resources.    The RCPSP with renewable resources is denoted as $\emptyset,m,1|cpm|C_{\max}$, where $\alpha=\emptyset,m,1$.  $\emptyset$ means there are no structural resources, $m=|S_{\cal R}|$ indicates the number of resources, and $1$ denotes that they are renewable.  Meanwhile, $\beta=cpm$ indicates a strict finish-to-start precedence relation between activities, and $\gamma=C_{max}$ means the objective function is to minimize the project's makespan.  In contrast, the RCPSPS with strictly non-renewable resources is denoted as $\emptyset,m,T|cpm|C_{max}$ where $\alpha=\emptyset,m,T$. $\emptyset$ means there are no structural resources, $m=|S_{\cal R}|$ indicates the number of resources, and $T$ denotes that they are non-renewable and their availability specified for the entire project horizon, $\beta=cpm$ means there is strict finish to start precedency relation between activities, and $\gamma=C_{max}$ means the objective function is to minimize the project makespan.  In both problems, each activity ${\cal v}_{\cal i} \in {\cal V}$ has its associated duration $d_{\cal i} \in D$, and a set of precedence arcs ${\cal A}_{i}$.  The AoN network is assumed to be well-formed so that when all activities are complete, the project has proceeded from its dummy start activity to its dummy finish activity.   Additionally, each combination of activity ${\cal v}_{\cal i}$ and resource $s_{\cal l}$ has its associated resource requirement ${\cal r}_{\cal il} \in {\cal R}$. Additionally, each resource $s_{\cal l}$ has its availability, which takes a value of $C_{\cal lRr}$ when the resource is renewable and $C_{\cal lRn}$ when it is non-renewable.  Both problems have the same objective function in Eq. \ref{Eq:RCPSP_Objective} and the constraints in Eqs. \ref{Eq:RCPSP_constraint1}--\ref{Eq:RCPSP_constraint2}. Additionally, Eq. \ref{Eq:RCPSP_constraint3} imposes a constraint on the project's renewable resources when concurrent ongoing activities are being executed.  In contrast, when the renewable resources of the RCPSP above are replaced entirely with non-renewable ones, Eq. \ref{Eq:RCPSP_constraint3} is replaced with \ref{Eq:RCPSP_constraint4} so that the availability of the non-renewable resources is imposed on the project horizon \cite{NUDTASOMBOON:1997:00,coelho:2011:00}.  Note that it is identical in form to Eq. \ref{Eq:RCPSP_constraint3} except for the summation over the time index k.  This formulation was first introduced by Pritsker et al.\cite{pritsker:1969:00} and later extended to incorporate several real-world scenarios as found in \cite{kone:2011:00,coelho:2011:00}. 

\nomenclature[I]{${\cal i}$}{Activity index ${\cal i} \in \{1, \dots, n\}$}
\nomenclature[I]{${\cal l}$}{Resource index ${\cal l} \in \{1, \dots, \lvert S_{\cal R}\rvert\}$}
\nomenclature[D]{$k$}{Time step index $k \in \{1, \ldots, K$\}}
\nomenclature[P]{$d_{\cal i}$}{Duration of activity ${\cal i}$}
\nomenclature[P]{${\cal r}_{\cal il}$}{Resource requirement of activity ${\cal i}$ from resource ${\cal l}$ }
\nomenclature[D]{${\cal x_{\cal ik}}$}{Equal to $1$ if activity ${\cal i}$ starts at timestep  $k$}

\begin{subequations}
\begin{alignat}{3}
\min \; \sum_{k = 1}^{K} & k \cdot {\cal x}_{nk}  \label{Eq:RCPSP_Objective}\\
\text{s.t.}\quad \sum_{k = 1}^{K} & {\cal x}_{{\cal i}k} &&=1 && \quad \forall {\cal i} = \{1, \dots, n\} \label{Eq:RCPSP_constraint1} \\
\sum_{k = 1}^{K} & k \cdot {\cal x}_{jk} - \left(\sum_{k = 1}^{K} k \cdot {\cal x}_{ik} + d_{\cal i} \right) &&\geq 0 && \quad \forall ({\cal i}, {\cal j}) \in {\cal A} \label{Eq:RCPSP_constraint2} \\
\mbox{(\emph{Renewable Resources})}\quad \sum_{{\cal i} = 1}^n  \sum_{\kappa=k-d_i}^{\kappa=k} & {\cal r}_{{\cal il}} \, {\cal x}_{{\cal i}\kappa} && \leq C_{\cal lRr} &&\quad \forall {\cal l} = \{1, \dots, |S_{\cal R}|\},\forall k \in \{1, \dots, K\}  \label{Eq:RCPSP_constraint3}\\\nonumber 
& \qquad\qquad\qquad\quad OR && \\
\mbox{(\emph{Non-Renewable Resources})}\quad \sum_{i = 1}^n\sum_{k = 1}^\kappa & {\cal r}_{{\cal i}l} \, {\cal x}_{{\cal i}k} && \leq C_{\cal lRn}  && \quad \forall {\cal l} = \{1, \dots, |S_{\cal R}|\},\forall \kappa \in \{1, \dots, K\}\label{Eq:RCPSP_constraint4}  
\end{alignat}
\end{subequations}
where: 
\begin{itemize}
\item $x_{{\cal i}k}$: binary decision variable equal to 1 if activity $\cal i$ starts at time $k$, and 0 otherwise.
\item Eq.~\ref{Eq:RCPSP_Objective}: minimizes the project makespan by assigning a start time to the terminal activity $n$.
\item Eq.~\ref{Eq:RCPSP_constraint1}: ensures that each activity starts exactly once within the project's time horizon.  In many cases, for computational efficiency, the project time horizon is replaced with a time interval $[EST_{\cal i},LST_{\cal i}]$ for each activity ${\cal i}$ where the earliest starting time $EST_{\cal i}$ and the latest starting time $LST_{\cal i}$ have been precalculated.  
\item Eq.~\ref{Eq:RCPSP_constraint2}: enforces precedence constraints by requiring successor activities to start only after their predecessor activities finish.
\item Eq.~\ref{Eq:RCPSP_constraint3}: enforces resource feasibility by ensuring that, at any time $k$, the total resource demand across concurrent ongoing activities does not exceed the available capacity $C_{\cal lRr}$.  Note that the summation over the time interval $\kappa = \{k-di,k\}$ measures whether an activity is ongoing or not by including any nonzero start times over that interval.  
\item Eq.~\ref{Eq:RCPSP_constraint4}: enforces resource feasibility by ensuring that for the entire project duration, the total resource demand across all activities does not exceed the available capacity $C_{\cal lRn}$.  
\end{itemize}

\vspace{-0.1in}
\subsection{Model-Based Systems Engineering}\label{Sec:MBSEIntro}
In the meantime, Model-Based Systems Engineering (MBSE) has emerged as a transformative approach for graphically modeling large, complex systems throughout their lifecycle \cite{Bajaj:2011:00}.  More interestingly, MBSE has been able to represent and analyze enterprise architectures (EA) such as those required to manage complex mega-projects \cite{mordecai:2022:00, sitton:2018:00}.  MBSE employs the Systems Modeling Language (SysML) to capture requirements, structure, behavior, and their interdependencies across multiple layers of an enterprise \cite{Kaslow2016CubeSatMB, schindel12015accelerating}. By enhancing traceability, improving stakeholder communication, and enabling real-time analysis, MBSE renders enterprise-scale complexity explicit and analyzable.  Such system-level thinking and analytical capabilities are indispensable in mega-projects that span multiple organizations, contracts, and life-cycle phases \cite{Madni2018ModelbasedSE, Dean2012ModelBasedSE, sitton:2018:00, dezhboro:2024:00}.  Within this MBSE-centric view, EA provides the strategic intent and governance objectives that the models must satisfy \cite{papke2020implementing}. EA is a systematic and holistic approach to designing and managing an organization's information systems and related capabilities, ensuring alignment with enterprise strategy \cite{busch:2025:00}. In practice, EA frames architecture evaluations--e.g., whether the enterprise can coordinate multi-project workflows at scale, absorb scope changes, and synchronize shared resources and dependencies across contractors--within decision contexts such as project planning, schedule risk management, and resource optimization \cite{gellweiler:2020:00, haren:2011:00}. MBSE operationalizes these EA concerns by mapping strategic objectives to modeled structures, behaviors, and performance measures that can be queried, simulated, and optimized for making decisions in mega-projects \cite{zhang2025mbse}. Studies in the literature reinforce this perspective by connecting MBSE, enterprise architecture, and project execution. Sitton and Reich \cite{sitton:2018:00} demonstrate how MBSE can be used to formalize and verify enterprise architecture frameworks. Their work demonstrates that MBSE enhances the completeness, traceability, integrity, and interoperability of enterprise processes, as evidenced by several case studies. In addition, Atenico et al. \cite{atencio:2022:00} provide a review of enterprise architecture as a governance instrument, highlighting its ability to have structural layers (e.g., strategy, business, application, technology) and its role in aligning projects with enterprise strategy. These works emphasize that EA has a strong influence on project outcomes by coordinating workflows, integrating project governance mechanisms, and ensuring that projects are aligned with enterprise-wide objectives. Together, these findings underscore that because EA profoundly influences mega-project execution, MBSE offers a promising approach to model, analyze, and operationalize enterprise architectures for mega-project management \cite{zhang2025mbse,sitton:2018:00, atencio:2022:00}. In summary, EA sets the \emph{why} and \emph{where} of strategic alignment and governance, while MBSE supplies the \emph{what} and \emph{how} of formal artifacts and analyses that make enterprise architectures executable for megaproject planning, risk management, and resource coordination.

\begin{figure}[htbp]
\centering
\includegraphics[width=\textwidth]{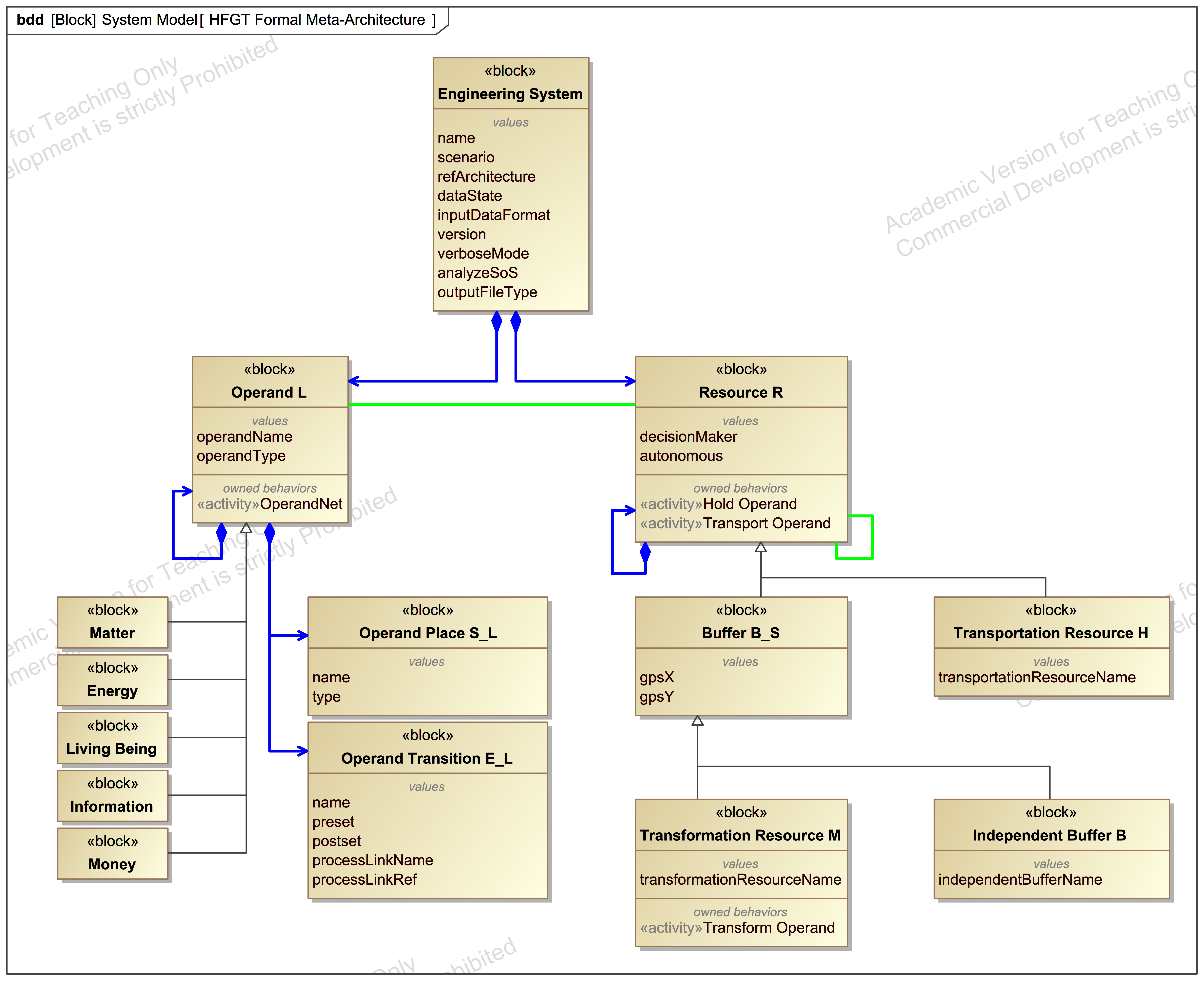}
\caption{A SysML Block Definition Diagram of the System Form of the Engineering System Meta-Architecture\cite{Schoonenberg:2019:ISC-BK04}.}
\label{Fig:LFESMetaArchitecture}
\end{figure}

While a complete introduction to Model-Based Systems Engineering (MBSE) and SysML is beyond the scope of this paper, the essential elements of two key SysML diagrams are introduced to understand how to represent a system's architecture\cite{Crawley2015SystemAS}. The diagram in Fig. \ref{Fig:LFESMetaArchitecture}, called the Block Definition Diagram (BDD), captures the form of the engineering system (or enterprise) in terms of its constituent elements and relationships. In the case of Fig. \ref{Fig:LFESMetaArchitecture}, the BDD shows a \emph{meta-architecture} without drawing all of the \emph{instances} of each block and with each block being stated in a language or vocabulary that is independent of any specific application domain\cite{Schoonenberg:2019:ISC-BK04}.  In the context of this paper, a BDD contains (at a minimum): 

\begin{itemize}
\item \textbf{Blocks:} that represent system elements, components, or subsystems.
\item \textbf{Attributes:} that represent characteristics or properties of each block.
\item \textbf{Operations:} that represent functions, activities, or processes that a block can perform.  Each combination of an operation in a block describes a capability.  
\end{itemize}

\begin{figure}
    \centering
    \includegraphics[width=1\linewidth]{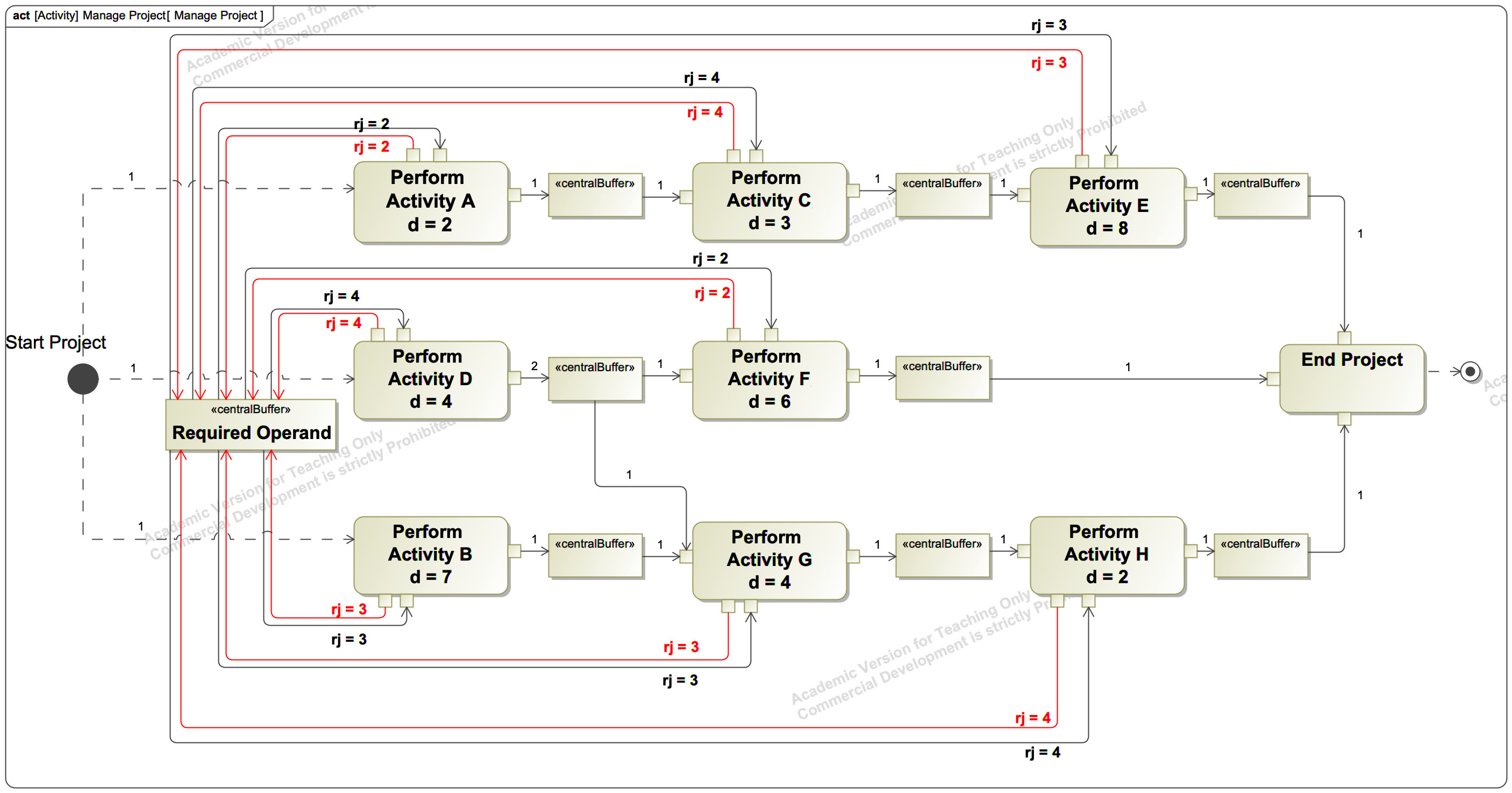}
    \caption{A  SysML Activity Diagram corresponding to the AoN project network shown in Fig. \ref{fig:SampleProjectNetwork}.}
    \label{fig:SampleProjectACT}
\end{figure}

The second diagram, called the Activity Diagram (ACT), is used to model a system's function in terms of its constituent activities. An example ACT is shown in Fig.~\ref{fig:SampleProjectACT} (and is elaborated further in Sec.~\ref {Sec:Reconciliation}). In the context of this paper, an ACT contains (at a minimum):

\begin{itemize}
\item \textbf{Actions (or Activities)}: Represent specific tasks, operations, or processes within the system.  Note that the operations found in a BDD are often represented as actions in an ACT.  Similarly, there is no distinction between a process in a process flow diagram and an action in an ACT.  
\item \textbf{Arcs}: Arrows indicate the flow from one action to another, showing the sequential or parallel execution of processes.
\item \textbf{Central Buffers:} Buffers that instantiate a pool of resources in this diagram. The object flows demonstrate the flow of required operands to activities. Object flows in red color depicts the release of renewable resources after being seized by each activity. 
\end{itemize}
Together, the activity diagram illustrates how the system's functions are executed to reveal its behavior.  

\vspace{-0.1in}
\subsection{Hetero-functional Graph Theory Definitions}\label{Sec:HFGTIntro}
While MBSE, and more specifically SysML, can graphically model large, complex systems, it does not have in-built functionality for conducting quantitative analysis.   Fortunately, hetero-functional graph theory (HFGT) provides an analytical means of translating graphical SysML models into mathematical models \cite{Farid:2025:ISC-JR06}.  As shown in Fig. \ref{Fig:LFESMetaArchitecture}, this translation requires the HFGT meta-architecture stated in SysML.  

The HFGT meta-architecture introduces several meta-elements whose definitions are formally introduced here:  
\begin{defn}[System Operand \cite{SE-Handbook-Working-Group:2015:00}]\label{Defn:D1 System Operand}
An asset or object $l_i \in L$ that is operated on or consumed during the execution of a process.
\end{defn}

\nomenclature[Y]{$l_i\in L$}{Operands}
\nomenclature[Z]{$i$}{Operand Index}

\begin{defn}[System Process\cite{Hoyle:1998:00,SE-Handbook-Working-Group:2015:00}]\label{Defn:D2 System Process}
An activity $p_w \in P$ that transforms or transports a predefined set of input operands into a predefined set of outputs. 
\end{defn}

\nomenclature[Y]{$p_w \in P$}{Processes}
\nomenclature[Z]{$w$}{Process Index}

\begin{defn}[System Resource \cite{SE-Handbook-Working-Group:2015:00}]\label{Defn:SystemResource}
An asset or object $r_v \in R$ that facilitates the execution of a process.  
\end{defn}

\nomenclature[Y]{$r_v\in R$}{Resources}
\nomenclature[Z]{$v$}{Resource Index}

\noindent Importantly, these meta-elements are organized around the universal structure of human language.  Namely, system resources $R$ serve as subjects, system processes $P$ serve as predicates, and operands $L$ serve as objects within the predicates.  The system resources $R=M \cup B \cup H$ are classified into transformation resources $M$, independent buffers $B$, and transportation resources $H$.  

\begin{rem}
It is critical to recognize that HFGT and the RCPSP do not define the word ``resource" equivalently.  In reality, what the RCPSP calls a resource, HFGT calls an operand, and what HFGT calls a resource, the machine scheduling taxonomy calls a ``structural resource" as defined by the $\alpha_1$ property.  Because the RCPSP taxonomy forces $\alpha_1=\emptyset$ and does not address structural resources, this work adopts the HFGT definitions of operand and resource instead of those of the RCPSP.  This maintains the distinction between operands and resources.  Consequently, what the RCPSP calls ``resource requirements" are called ``operand requirements" for the remainder of the paper.  Furthermore, what are called ``renewable and non-renewable resources" are called ``renewable and non-renewable operands" for the remainder of the paper.  For simplicity, the widely accepted acronym RCPSP is retained in the paper, although strictly speaking, it refers to the ``operand-constrained project scheduling problem."  As Sec \ref{Sec:Reconciliation} elaborates, this choice of definitions also facilitates the concordance between the RCPSP and HFGT.
\end{rem}

Returning to HFGT, the set of ``buffers" $ B_S=M\cup B$ is introduced to support the discussion.  
\begin{defn}[Buffer\cite{Schoonenberg:2019:ISC-BK04,Farid:2022:ISC-J49}]\label{Defn:D4 Buffer}
A resource $r_v \in R$ is a buffer $b_s \in B_S$ if it is capable of storing or transforming one or more operands at a unique location in space.  
\end{defn}

\nomenclature[Y]{$\epsilon_{\psi} \in {\cal E}_S$}{Capabilities, Engineering System Net Transitions}
\nomenclature[Z]{$\psi$}{Capability Index}

Equally important, the system processes $P = P_\mu \cup P_{\bar{\eta}}
$ are classified into transformation processes $P_\mu$ and refined transportation processes $P_\eta
$.  The latter arises from the simultaneous execution of one transportation process and one holding process.  Finally, hetero-functional graph theory emphasizes that resources are capable of one or more system processes to produce a set of ``capabilities"\cite{Schoonenberg:2019:ISC-BK04}.
\begin{defn}[Capability\cite{Schoonenberg:2019:ISC-BK04,Farid:2022:ISC-J49,Farid:2016:ISC-BC06}]\label{Defn:Capability}
An action $e_{wv} \in {\cal E}_S$ (in the SysML sense) defined by a system process $p_w \in P$ being executed by a resource $r_v \in R$.  It constitutes a subject + verb + operand sentence of the form: ``Resource $r_v$ does process $p_w
$".  
\end{defn}
\noindent The highly generic and abstract nature of these definitions has allowed HFGT to be applied to numerous application domains, including electric power, potable water, wastewater, natural gas, oil, coal, multi-modal transportation, mass-customized production, and personalized healthcare delivery systems.  For a more in-depth description of HFGT, readers are directed to past works\cite{Schoonenberg:2019:ISC-BK04,Farid:2022:ISC-J49,Farid:2016:ISC-BC06, Ghorbanichemazkati:2025:ISC-JR02}.

Returning to Fig. \ref{Fig:LFESMetaArchitecture}, the engineering system meta-architecture stated in SysML must be instantiated and ultimately transformed into the associated Petri net model. To that end, the positive and negative hetero-functional incidence tensors (HFIT) are introduced to describe the flow of operands through buffers and capabilities.  
\begin{defn}[The Negative 3$^{rd}$ Order Hetero-functional Incidence Tensor (HFIT) $\widetilde{\cal M}_\rho^-$\cite{Farid:2022:ISC-J49}]\label{Defn:D6 HFIT -ve}
The negative hetero-functional incidence tensor $\widetilde{\cal M_\rho}^- \in \{0,1\}^{|L|\times |B_S| \times |{\cal E}_S|}$  is a third-order tensor whose element $\widetilde{\cal M}_\rho^{-}(i,y,\psi)=1$ when the system capability ${\epsilon}_\psi \in {\cal E}_S$ pulls operand $l_i \in L$ from buffer $b_{s_y} \in B_S$.
\end{defn} 
\begin{defn}[The Positive  3$^{rd}$ Order Hetero-functional Incidence Tensor (HFIT)$\widetilde{\cal M}_\rho^+$\cite{Farid:2022:ISC-J49}]\label{Defn:D7 HFIT +ve}
 The positive hetero-functional incidence tensor $\widetilde{\cal M}_\rho^+ \in \{0,1\}^{|L|\times |B_S| \times |{\cal E}_S|}$  is a third-order tensor whose element $\widetilde{\cal M}_\rho^{+}(i,y,\psi)=1$ when the system capability ${\epsilon}_\psi \in {\cal E}_S$ injects operand $l_i \in L$ into buffer $b_{s_y} \in B_S$.
\end{defn}

\nomenclature[V]{$\widetilde{\cal M_\rho}^-$}{Negative 3$^{rd}$ Order Hetero-functional Incidence Tensor}
\nomenclature[V]{$\widetilde{\cal M_\rho}^+$}{Positive 3$^{rd}$ Order Hetero-functional Incidence Tensor}
\nomenclature[V]{$\widetilde{\cal M_\rho}$}{3$^{rd}$ Order Hetero-functional Incidence Tensor}
\nomenclature[V]{$\widetilde{M_\rho}^-$}{Negative 2$^{nd}$ Order Hetero-functional Incidence Tensor}
\nomenclature[V]{$\widetilde{M_\rho}^+$}{Positive 2$^{nd}$ Order Hetero-functional Incidence Tensor}
\nomenclature[V]{$\widetilde{M_\rho}$}{2$^{nd}$ Order Hetero-functional Incidence Tensor}

\noindent These incidence tensors are straightforwardly ``matricized" to form the 2$^{nd}$ Order Hetero-functional Incidence Matrix $M = M^+ - M^-$ with dimensions $|L||B_S|\times |{\cal E}|$. Consequently, the supply, demand, transportation, storage, transformation, assembly, and disassembly of multiple operands in distinct locations over time can be described by an Engineering System Net and its associated State Transition Function\cite{Schoonenberg:2022:ISC-J48}.
\begin{defn}[Engineering System Net\cite{Schoonenberg:2022:ISC-J48}]\label{Defn:ESN}
An elementary Petri net ${\cal N} = \{S, {\cal E}_S, \textbf{M}, W, Q\}$, where
\begin{itemize}
\item $S$ is the set of places with size: $|L||B_S|$,
\item ${\cal E}_S$ is the set of transitions with size: $|{\cal E}|$,
\item $\textbf{M}$ is the set of arcs, with the associated incidence matrices: $M = M^+ - M^-$,
\item $W$ is the set of weights on the arcs, as captured in the incidence matrices,
\item $Q=[Q_B; Q_E]$ is the marking vector for both the set of places and the set of transitions. 
\end{itemize}
\end{defn}
\nomenclature[Y]{${\cal N}_S$}{Engineering System Net}
\nomenclature[Y]{$S$}{Engineering System Net Places}
\nomenclature[Y]{$\textbf{M}$}{Engineering System Net Arcs}
\nomenclature[V]{$W$}{Engineering System Net Arc Weights}
\nomenclature[W]{$Q$}{Engineering System Net Marking Vector}
\nomenclature[W]{$Q_B$}{Engineering System Net Place Marking Vector}
\nomenclature[W]{$Q_{\cal E}$}{Engineering System Net Transition Marking Vector}
\nomenclature[V]{$\mathbb{M}$}{Incidence Matrix}
\nomenclature[V]{$\mathbb{M}^+$}{Positive Incidence Matrix}
\nomenclature[V]{$\mathbb{M}^-$}{Negative Incidence Matrix}

\begin{defn}[Engineering System Net State Transition Function\cite{Schoonenberg:2022:ISC-J48}]\label{Defn:ESN-STF}
The  state transition function of the engineering system net $\Phi()$ is:
\begin{equation}\label{Eq:PhiCPN}
Q[k+1]=\Phi(Q[k],U^-[k], U^+[k]) \quad \forall k \in \{1, \dots, K\}
\end{equation}
where $k$ is the discrete time index, $K$ is the simulation horizon, $Q=[Q_{B}; Q_{\cal E}]$, $Q_B$ has size $|L||B_S| \times 1$, $Q_{\cal E}$ has size $|{\cal E}_S|\times 1$, the input firing vector $U^-[k]$ has size $|{\cal E}_S|\times 1$, and the output firing vector $U^+[k]$ has size $|{\cal E}_S|\times 1$.  
\begin{align}\label{Eq:ESNSTF1}
Q_{B}[k+1]&=Q_{B}[k]+{M}^+U^+[k]\Delta T-{M}^-U^-[k]\Delta T \\ \label{Eq:ESNSTF2}
Q_{\cal E}[k+1]&=Q_{\cal E}[k]-U^+[k]\Delta T +U^-[k]\Delta T
\end{align}
where $\Delta T$ is the duration of the simulation time step.  
\end{defn}

\nomenclature[Y]{$\Phi_S()$}{Engineering System Net State Transition Function}
\nomenclature[W]{$U_S^-$}{Engineering System Net Input Firing Vector}
\nomenclature[W]{$U_S^+$}{Engineering System Net Output Firing Vector}
\nomenclature[W]{$U_S$}{Engineering System Net Firing Vector}
\nomenclature[V]{$\Delta T$}{Duration of Simulation Time Step}

In addition to the engineering system net, in HFGT, each operand can have its own state and evolution.  This behavior is described by an Operand Net and its associated State Transition Function for each operand.  
\begin{defn}[Operand Net\cite{Farid:2008:IEM-J04,Schoonenberg:2019:ISC-BK04,Khayal:2017:ISC-J35,Schoonenberg:2017:IEM-J34}]\label{Defn:OperandNet} Given operand $l_i$, an elementary Petri net ${\cal N}_{l_i}= \{S_{l_i}, {\cal E}_{l_i}, \textbf{M}_{l_i}, W_{l_i}, Q_{l_i}\}$ where 
\begin{itemize}
\item $S_{l_i}$ is the set of places describing the operand's state.  
\item ${\cal E}_{l_i}$ is the set of transitions describing the evolution of the operand's state.
\item $\textbf{M}_{l_i} \subseteq (S_{l_i} \times {\cal E}_{l_i}) \cup ({\cal E}_{l_i} \times S_{l_i})$ is the set of arcs, with the associated incidence matrices: $M_{l_i} = M^+_{l_i} - M^-_{l_i} \quad \forall l_i \in L$.  
\item $W_{l_i} : \textbf{M}_{l_i}$ is the set of weights on the arcs, as captured in the incidence matrices $M^+_{l_i},M^-_{l_i} \quad \forall l_i \in L$.  
\item $Q_{l_i}= [Q_{Sl_i}; Q_{{\cal E}l_i}]$ is the marking vector for both the set of places and the set of transitions. 
\end{itemize}
\end{defn}

\nomenclature[Y]{${\cal N}_{l_i}$}{$l_i^{th}$ Operand Net}
\nomenclature[Y]{$S_{l_i}$}{$l_i^{th}$ Operand Net Places}
\nomenclature[Y]{$\epsilon_{\chi l_i} $}{$l_i^{th}$ Operand Net Transitions $\in {\cal E}_{l_i}$}
\nomenclature[Y]{$\textbf{M}_{l_i}$}{$l_i^{th}$ Operand Net Arcs}
\nomenclature[V]{$M_{l_i}$}{$l_i^{th}$ Operand Net Incidence Matrix}
\nomenclature[V]{$M_{l_i}^+$}{$l_i^{th}$ Operand Net Positive Incidence Matrix}
\nomenclature[V]{$M_{l_i}^-$}{$l_i^{th}$ Operand Net Negative Incidence Matrix}
\nomenclature[V]{$W_{l_i}$}{$l_i^{th}$ Operand Net Arc Weights}
\nomenclature[W]{$Q_{l_i}$}{$l_i^{th}$ Operand Net Marking Vector}
\nomenclature[W]{$Q_{Sl_i}$}{$l_i^{th}$ Operand Net Place Marking Vector}
\nomenclature[W]{$Q_{{\cal E}l_i}$}{$l_i^{th}$ Operand Net Transition Marking Vector}

\begin{defn}[Operand Net State Transition Function\cite{Farid:2008:IEM-J04,Schoonenberg:2019:ISC-BK04,Khayal:2017:ISC-J35,Schoonenberg:2017:IEM-J34}]\label{Defn:OperandNet-STF}
The  state transition function of each operand net $\Phi_{l_i}()$ is:
\begin{equation}\label{Eq:PhiSPN}
Q_{l_i}[k+1]=\Phi_{l_i}(Q_{l_i}[k],U_{l_i}^-[k], U_{l_i}^+[k]) \quad \forall k \in \{1, \dots, K\} \quad i \in \{1, \dots, L\}
\end{equation}
where $Q_{l_i}=[Q_{Sl_i}; Q_{{\cal E} l_i}]$, $Q_{Sl_i}$ has size $|S_{l_i}| \times 1$, $Q_{{\cal E} l_i}$ has size $|{\cal E}_{l_i}| \times 1$, the input firing vector $U_{l_i}^-[k]$ has size $|{\cal E}_{l_i}|\times 1$, and the output firing vector $U^+[k]$ has size $|{\cal E}_{l_i}|\times 1$.  

\begin{align}\label{EQ:ON-STF}
Q_{Sl_i}[k+1]&=Q_{Sl_i}[k]+{M_{l_i}}^+U_{l_i}^+[k]\Delta T - {M_{l_i}}^-U_{l_i}^-[k]\Delta T \\ \label{CH6:CH eq:Q_E:HFNMCFprogram}
Q_{{\cal E} l_i}[k+1]&=Q_{{\cal E} l_i}[k]-U_{l_i}^+[k]\Delta T +U_{l_i}^-[k]\Delta T
\end{align}
\end{defn}

\nomenclature[Y]{$\Phi_{l_i}()$}{$l_i^{th}$ Operand Net State Transition Function}
\nomenclature[W]{$U_{l_i}^-$}{$l_i^{th}$ Operand Net Input Firing Vector}
\nomenclature[W]{$U_{l_i}^+$}{$l_i^{th}$ Operand Net Output Firing Vector}
\nomenclature[W]{$U_{l_i}$}{$l_i^{th}$ Operand Net Firing Vector}

\nomenclature[Z]{$\chi$}{Operand Net Transition Index}

Ultimately, the reconciliation of the RCPSP, model-based systems engineering, and hetero-functional graph theory in Sec. \ref{Sec:Reconciliation} relies on comparing the mathematics of RCPSP with the operand net state transition function found in Defn. \ref{Defn:OperandNet-STF}.  The engineering system net state transition function (Defn. \ref{Defn:ESN-STF}) is not part of the RCPSP reconciliation because $\alpha_1=0$, but is required in the management of mega-projects as Sec. \ref{Sec:Discussion} discusses.

\vspace{-0.1in}
\subsection{Hetero-functional Network Minimum Cost Flow Optimization}\label{Sec:HFNMCF_Intro}
HFGT describes the behavior of an engineering system using the Hetero-Functional Network Minimum Cost Flow (HFNMCF) problem\cite{Schoonenberg:2022:ISC-J48}. It optimizes the time-dependent flow and storage of multiple operands (or commodities) between buffers, allows for their transformation from one operand to another, and tracks the state of these operands.  In this regard, it is a very flexible optimization problem that applies to a wide variety of complex engineering systems.  For the purposes of this paper, the HFNMCF is a type of discrete-time-dependent, time-invariant, convex optimization program\cite{Schoonenberg:2022:ISC-J48}.

\newpage 
\begin{subequations}
\begin{align}\label{Eq:HFNMCF_ObjFunc}
Z = \sum_{k=1}^{K-1} x^T[k] F_{QP}[k] x[k] + f^T_{QP}[k] x[k] 
\end{align}
\vspace{-0.2in}
\begin{align}\label{Eq:HFNMCF_ESN-STF1}
\text{s.t. } -Q_{B}[k+1]+Q_{B}[k]+{M}^+U^+[k]\Delta T - {M}^-U^-[k]\Delta T=&0 && \!\!\!\!\!\!\!\!\!\!\!\!\!\!\!\!\!\!\!\!\!\!\!\!\!\!\!\!\!\!\!\!\!\!\!\!\!\!\!\!\!\forall k \in \{1, \dots, K\}\\  \label{Eq:HFNMCF_ESN-STF2}
-Q_{\cal E}[k+1]+Q_{\cal E}[k]-U^+[k]\Delta T + U^-[k]\Delta T=&0 && \!\!\!\!\!\!\!\!\!\!\!\!\!\!\!\!\!\!\!\!\!\!\!\!\!\!\!\!\!\!\!\!\!\!\!\!\!\!\!\!\!\forall k \in \{1, \dots, K\}\\ \label{Eq:HFNMCF_DurationConstraint}
 - U_\psi^+[k+d_{\psi}]+ U_{\psi}^-[k] = &0 && \!\!\!\!\!\!\!\!\!\!\!\!\!\!\!\!\!\!\!\!\!\!\!\!\!\!\!\!\!\!\!\!\!\!\!\!\!\!\!\!\!\forall k\in \{1, \dots, K\}, \quad \psi \in \{1, \dots, {\cal E}_S\}\\ \label{Eq:HFNMCF_OperandNet-STF1}
 -Q_{Sl_i}[k+1]+Q_{Sl_i}[k]+{M}_{l_i}^+U_{l_i}^+[k]\Delta T - {M}_{l_i}^-U_{l_i}^-[k]\Delta T=&0 && \!\!\!\!\!\!\!\!\!\!\!\!\!\!\!\!\!\!\!\!\!\!\!\!\!\!\!\!\!\!\!\!\!\!\!\!\!\!\!\!\!\forall k \in \{1, \dots, K\}, \quad i \in \{1, \dots, |L|\}\\\label{Eq:HFNMCF_OperandNet-STF2}
-Q_{{\cal E}l_i}[k+1]+Q_{{\cal E}l_i}[k]-U_{l_i}^+[k]\Delta T + U_{l_i}^-[k]\Delta T=&0 && \!\!\!\!\!\!\!\!\!\!\!\!\!\!\!\!\!\!\!\!\!\!\!\!\!\!\!\!\!\!\!\!\!\!\!\!\!\!\!\!\!\forall k \in \{1, \dots, K\}, \quad i \in \{1, \dots, |L|\}\\ \label{Eq:HFNMCF_OperandNetDurationConstraint}
- U_{\chi l_i}^+[k+d_{\chi l_i}]+ U_{\chi l_i}^-[k] = &0 &&  \!\!\!\!\!\!\!\!\!\!\!\!\!\!\!\!\!\!\!\!\!\!\!\!\!\!\!\!\!\!\!\!\!\!\!\!\!\!\!\!\!
\forall k\in \{1, \dots, K\}, \chi\in \{1, \dots, |{\cal E}_{l_i}|\}, \: l_i \in \{1, \dots, |L|\}\\ 
\label{Eq:HFNMCF_SyncPlus}
U^+_L[k] - \widehat{\Lambda}^+ U^+[k] =&0 && \!\!\!\!\!\!\!\!\!\!\!\!\!\!\!\!\!\!\!\!\!\!\!\!\!\!\!\!\!\!\!\!\!\!\!\!\!\!\!\!\!\forall k \in \{1, \dots, K\}\\ \label{Eq:HFNMCF_SyncMinus}
U^-_L[k] - \widehat{\Lambda}^- U^-[k] =&0 && \!\!\!\!\!\!\!\!\!\!\!\!\!\!\!\!\!\!\!\!\!\!\!\!\!\!\!\!\!\!\!\!\!\!\!\!\!\!\!\!\!\forall k \in \{1, \dots, K\}\\\label{Eq:HFNMCF_ESN-Exogenous}
\begin{bmatrix}
D_{Up} & \mathbf{0} \\ \mathbf{0} & D_{Un}
\end{bmatrix} \begin{bmatrix}
U^+ \\ U^-
\end{bmatrix}[k] =& \begin{bmatrix}
C_{Up} \\ C_{Un}
\end{bmatrix}[k] && \!\!\!\!\!\!\!\!\!\!\!\!\!\!\!\!\!\!\!\!\!\!\!\!\!\!\!\!\!\!\!\!\!\!\!\!\!\!\!\!\!\forall k \in \{1, \dots, K\} \\\label{Eq:HFNMCF_OperandNet-Exogenous}
\begin{bmatrix}
E_{Lp} & \mathbf{0} \\ \mathbf{0} & E_{Ln}
\end{bmatrix} \begin{bmatrix}
U^+_{l_i} \\ U^-_{l_i}
\end{bmatrix}[k] =& \begin{bmatrix}
F_{Lpi} \\ F_{Lni}
\end{bmatrix}[k] && \!\!\!\!\!\!\!\!\!\!\!\!\!\!\!\!\!\!\!\!\!\!\!\!\!\!\!\!\!\!\!\!\!\!\!\!\!\!\!\!\!\forall k \in \{1, \dots, K\}\quad i \in \{1, \dots, |L|\} \\\label{Eq:HFNMCF_InitCond} 
\begin{bmatrix} Q_B ; Q_{\cal E} ; Q_{SL} \end{bmatrix}[1] =& \begin{bmatrix} C_{B1} ; C_{{\cal E}1} ; C_{{SL}1} \end{bmatrix} \\ \label{Eq:HFNMCF_FinalCond}
\begin{bmatrix} Q_B ; Q_{\cal E} ; Q_{SL} ; U^- ; U_L^- \end{bmatrix}[K+1] =   &\begin{bmatrix} C_{BK} ; C_{{\cal E}K} ; C_{{SL}K} ; \mathbf{0} ; \mathbf{0} \end{bmatrix}\\\label{Eq:HFNMCF_Capacity}
\underline{E}_{CP} \leq D(X) \leq& \overline{E}_{CP} \\\label{Eq:HFNMCF_DeviceModels1}
g(X,Y) =& 0 \\\label{Eq:HFNMCF_DeviceModels2}
h(Y) \leq& 0
\end{align}
\end{subequations}
where $X=\left[x[1]; \ldots; x[K]\right]$  is the vector of primary decision variables and $x[k] = \begin{bmatrix} Q_B ; Q_{\cal E} ; Q_{SL} ; Q_{{\cal E}L} ; U^- ; U^+ ; U^-_L ; U^+_L \end{bmatrix}[k]$ $\forall k \in \{1, \dots, K\}$.  The mathematical domain of these decision variables depends on the application domain (e.g., real, positive real, integer, binary values.) $Y=\left[y[1]; \ldots; y[K]\right]$ is the vector of auxiliary decision variables whose presence and nature depend on the specific problem application.  

\nomenclature[W]{$X$}{Vector of primary decision variable $x[k]$}
\nomenclature[W]{$x[k]$}{$\{ Q_B ; Q_{\cal E} ; Q_{SL} ; Q_{{\cal E}L} ; U^- ; U^+ ; U^-_L ; U^+_L \}[k]$ $\forall k \in \{1, \dots, K\}$}

\vspace{0.1in}
\subsubsection{Objective Function}
In Eq.  \ref{Eq:HFNMCF_ObjFunc}, $Z$ is a convex objective function separable in discrete time steps $k$.  The matrix $F_{QP}$ and vector $f_{QP}$ enable quadratic and linear costs to be incurred from the place and transition markings in both the engineering system net and operand nets. 
\begin{itemize}
\item $F_{QP}[k]$ is a set of positive semi-definite, diagonal, quadratic coefficient matrices.
\item $f_{QP}[k]$ are a set of linear coefficient vectors.
\end{itemize}

\vspace{0.1in}
\subsubsection{Equality Constraints}

\begin{itemize}
\item Equations \ref{Eq:HFNMCF_ESN-STF1} and \ref{Eq:HFNMCF_ESN-STF2} describe the state transition function of an engineering system net (Defn \ref{Defn:ESN} \& \ref{Defn:ESN-STF}).
\item Equation \ref{Eq:HFNMCF_DurationConstraint} is the engineering system net transition duration constraint where the end of the $\psi^{th}$ transition occurs $k_{d\psi}$ time steps after its beginning. 
\item Equations \ref{Eq:HFNMCF_OperandNet-STF1} and \ref{Eq:HFNMCF_OperandNet-STF2} describe the state transition function of each operand net ${\cal N}_{l_i}$ (Defn. \ref{Defn:OperandNet} \& \ref{Defn:OperandNet-STF}) associated with each operand $l_i \in L$.  
\item Equation \ref{Eq:HFNMCF_OperandNetDurationConstraint} is the operand net transition duration constraint where the end of the $\chi^{th}$ transition occurs $d_{\chi _{l_i}}$ time steps after its beginning. 
\item Equations \ref{Eq:HFNMCF_SyncPlus} and \ref{Eq:HFNMCF_SyncMinus} are synchronization constraints that couple the input and output firing vectors of the engineering system net to the input and output firing vectors of the operand nets, respectively. $U_L^-$ and $U_L^+$ are the vertical concatenations of the input and output firing vectors $U_{l_i}^-$ and $U_{l_i}^+$, respectively.
\begin{align}
U_L^-[k]&=\left[U^-_{l_1}; \ldots; U^-_{l_{|L|}}\right][k] \\
U_L^+[k]&=\left[U^+_{l_1}; \ldots; U^+_{l_{|L|}}\right][k]
\end{align}
\item Equations \ref{Eq:HFNMCF_ESN-Exogenous} and \ref{Eq:HFNMCF_OperandNet-Exogenous} are boundary conditions.  Eq. \ref{Eq:HFNMCF_ESN-Exogenous} is a boundary condition constraint that allows some of the engineering system net firing vectors decision variables to be set to an exogenous constant.  Eq. \ref{Eq:HFNMCF_OperandNet-Exogenous} does the same for the operand net firing vectors.  
\item Equations \ref{Eq:HFNMCF_InitCond} and \ref{Eq:HFNMCF_FinalCond} are the initial and final conditions of the engineering system net and the operand nets, where $Q_{SL}$ is the vertical concatenation of the place marking vectors of the operand nets $Q_{Sl_i}$.
\begin{align}
Q_{SL}^-[k]&=\left[Q^-_{Sl_1}; \ldots; U^-_{Sl_{|L|}}\right][k] \\
U_{SL}^+[k]&=\left[U^+_{Sl_1}; \ldots; U^+_{Sl_{|L|}}\right][k]
\end{align}
\end{itemize}

\vspace{0.1in}
\subsubsection{Inequality Constraints}
 $D_{QP}()$ and vector $E_{QP}$ in Equation \ref{Eq:HFNMCF_Capacity} place capacity constraints on the vector of primary decision variables at each time step.

\vspace{0.1in}
\subsubsection{Device Model Constraints}
g(X,Y) and h(Y) are a set of device model functions whose presence and nature depend on the specific problem application.  They can not be further elaborated until the application domain and its associated capabilities are identified.  

\vspace{-0.1in}
\section{Reconciliation of Resource-Constrained Project Scheduling Problem, Model-Based Systems Engineering, \& Hetero-functional Graph Theory}\label{Sec:Reconciliation}
While it is clear that RCPSP, model-based systems engineering, and hetero-functional graph theory can all model a diversity of complex systems,  it is also clear that they use significantly different terminology that makes it difficult to relate them to each other conceptually. To demonstrate these relationships concretely, Sec.\ref{Sec:RCPSPExample} uses the renewable and non-renewable variants of RCPSP to schedule the illustrative example introduced in Fig. \ref{fig:SampleProjectNetwork}.  Then Sec.\ref{MBSE_SecIII} represents the same project in MBSE diagrams.  Finally, Sec.\ref{sec:HFGT_SecIII} formalizes the RCPSP elements using HFGT notation and introduces an algorithm to transform the AoN network into an OperandNet for both renewable and non-renewable operand cases. The section then presents the solution of the HFNCMF problem when it is specialized to the case of the RCPSP.    
\vspace{-0.1in}
\subsection{Resource Constrained Project Scheduling}\label{Sec:RCPSPExample}
First, the renewable operand variant of the RCPSPS, as presented in Eqs.~\ref{Eq:RCPSP_Objective}--\ref{Eq:RCPSP_constraint3}, is solved and discussed.  A value of $C_{\cal lRr}=8$ is chosen.  The optimal value ${\cal x}^*$ shows when each project task begins:  
\begin{align}\label{Eq:RCPSP_OptimalXRenewable}
{\cal x}^{*} =
\left[
\begin{array}{*{18}{r|}r}
1 &   &   &   &   &   &    &    &    &    &    &    &    &    &    &    &   &  &  \\\hline
  &   & 1 &   &   &   &    &    &    &    &    &    &    &    &    &    &   &  &  \\\hline
  &   &   &   & 1 &   &    &    &    &    &    &    &    &    &    &    &   &  &  \\\hline
1 &   &   &   &   &   &    &    &    &    &    &    &    &    &    &    &   &  &  \\\hline
  &   &   &   &   &   &    & 1  &    &    &    &    &    &    &    &    &   &  &  \\\hline
  &   &   &   &   &   &    & 1  &    &    &    &    &    &    &    &    &   &  &  \\\hline
  &   &   &   &   &   &    &    &    & 1  &    &    &    &    &    &    &   &  &  \\\hline
  &   &   &   &   &   &    &    &    &    &    &    &    & 1  &    &    &   &  &  \\\hline
  &   &   &   &   &   &    &    &    &    &    &    &    &    &    & 1  &   &  &  \\
\end{array}
\right]
\end{align}
where the row index ${\cal i}$ corresponds to project tasks and the column index $k$ corresponds to time steps.  To gain further insight into the optimal schedule of the project, Table \ref{tab:RCPSP-schedule-renewable} visualizes the quantity $\sum_{\kappa=k-d_i}^{\kappa=k} {\cal r}_{{\cal il}} \, {\cal x}_{{\cal i}\kappa} \forall k \in \{1, \dots, K\} $ (introduced earlier in Eq. \ref{Eq:RCPSP_constraint3}).  It shows not only the durations during which each project activity is performed, but also the number of operands utilized when those project activities are executed.  Furthermore, Table \ref{tab:RCPSP-schedule-renewable} shows that the activity start times respect both precedence and operadnd feasibility constraints.  More specifically, the renewable capacity $C_{\cal lRr}=8$ is fully utilized in certain intervals (e.g., k=[8,\ldots,13]), while underutilization occurs elsewhere due to precedence constraints.  The resulting optimal project makespan is 15 time units -- which is interestingly two time units longer than if \ref{Eq:RCPSP_constraint3} had been relaxed, thereby quantifying the impact of renewable operand constraints on the project's duration. 

\begin{table}[h!]
\centering
\caption{Optimal Project Schedule for the Renewable Operands Variant of RCPSCP ($\emptyset,m,1|cpm|C_{max}$).}
\label{tab:RCPSP-schedule-renewable}
\begin{adjustbox}{max width=\textwidth}
\begin{tabular}{c|cccccccccccccccc}
\toprule
\textbf{Activity/Time} & 0  & 1  & 2  & 3  & 4  & 5  & 6  & 7  & 8  & 9  & 10 & 11 & 12 & 13 & 14 & 15 \\
\midrule
Perform Activity A     &    & 2  & 2  &    &    &    &    &    &    &    &    &    &    &    &    &    \\
Perform Activity B     &    &    &    & 3  & 3  & 3  & 3  & 3  & 3  & 3  &    &    &    &    &    &    \\
Perform Activity C     &    &    &    &    &    & 4  & 4  & 4  &    &    &    &    &    &    &    &    \\
Perform Activity D     &    & 4  & 4  & 4  & 4  &    &    &    &    &    &    &    &    &    &    &    \\
Perform Activity E     &    &    &    &    &    &    &    &    & 3  & 3  & 3  & 3  & 3  & 3  & 3  & 3  \\
Perform Activity F     &    &    &    &    &    &    &    &    & 2  & 2  & 2  & 2  & 2  & 2  &    &    \\
Perform Activity G     &    &    &    &    &    &    &    &    &    &    & 3  & 3  & 3  & 3  &    &    \\
Perform Activity H     &    &    &    &    &    &    &    &    &    &    &    &    &    &    & 4  & 4  \\
\midrule
\textbf{Operands Allocated Per Period} & 0  & 6  & 6  & 7  & 7  & 7  & 7  & 7  & 8  & 8  & 8  & 8  & 8  & 8  & 7  & 7 \\\hline
\textbf{Operands Allocated in Project} & 0  & 6  & 6  & 9  & 9  & 13 & 13  & 13  & 18 & 18 & 21  & 21  & 21  & 21  & 25  & 25 \\
\bottomrule
\end{tabular}
\end{adjustbox}
\end{table}

Next, the non-renewable operand variant of the RCPSPS in Eqs.~\ref{Eq:RCPSP_Objective}--\ref{Eq:RCPSP_constraint2}, \ref{Eq:RCPSP_constraint4} is solved and discussed.  A value of $C_{\cal lRn}=25$ is chosen.  The optimal value ${\cal x}^*$ shows when each project task begins:  
\begin{align}\label{Eq:RCPSP_OptimalXNonRenewable}
{\cal x}^{*} =
\left[
\begin{array}{*{18}{r|}r}
1 &   &   &   &   &   &    &    &    &    &    &    &    &    &    &    &   &  &  \\\hline
1 &   &   &   &   &   &    &    &    &    &    &    &    &    &    &    &   &  &  \\\hline
  &   &  1&   &   &   &    &    &    &    &    &    &    &    &    &    &   &  &  \\\hline
1 &   &   &   &   &   &    &    &    &    &    &    &    &    &    &    &   &  &  \\\hline
  &   &   &   &   & 1 &    &    &    &    &    &    &    &    &    &    &   &  &  \\\hline
  &   &   &   & 1 &   &    &    &    &    &    &    &    &    &    &    &   &  &  \\\hline
  &   &   &   &   &   &    & 1  &    &    &    &    &    &    &    &    &   &  &  \\\hline
  &   &   &   &   &   &    &    &    &    &    & 1  &    &    &    &    &   &  &  \\\hline
  &   &   &   &   &   &    &    &    &    &    &    &    & 1  &    &    &   &  &  \\
\end{array}
\right]
\end{align}
Again, to gain further insight into the optimal schedule of the project, Table \ref{tab:RCPSP-schedule-nonrenewable} visualizes the quantity \newline $\sum_{\kappa=k-d_i}^{\kappa=k} {\cal r}_{{\cal il}} \, {\cal x}_{{\cal i}\kappa} \forall k \in \{1, \dots, K\} $ (introduced earlier in Eq. \ref{Eq:RCPSP_constraint3}).  As in the previous example, this table shows not only the durations during which each project activity is performed, but also the number of operands utilized when those project activities are executed.  Table \ref{tab:RCPSP-schedule-nonrenewable} shows that the activity start times respect both the precedence and non-renewable operand constraints.  Interestingly, the non-renewable operand result in Table~\ref{tab:RCPSP-schedule-nonrenewable} relaxes the per-time step limit while imposing a cumulative availability constraint $C_{\cal lRn}$.  This relaxation allows multiple activities to overlap in early periods, even if their combined demand exceeds the renewable cap $C_{\cal lRr}$. Consequently, greater parallelism is achieved, reducing the makespan back to 13. 

\begin{table}[h!]
\centering
\caption{Optimal Project Schedule for the Non-renewable Operand Variant of RCPSCP $(\emptyset,m,1T|cpm|C_{max})$.}
\label{tab:RCPSP-schedule-nonrenewable}
\begin{adjustbox}{max width=\textwidth}
\begin{tabular}{c|cccccccccccccccc}
\toprule
\textbf{Activity/Time} & 0  & 1  & 2  & 3  & 4  & 5  & 6  & 7  & 8  & 9  & 10 & 11 & 12 & 13 & 14 & 15 \\
\midrule
Perform Activity A     &    & 2  & 2  &    &    &    &    &    &    &    &    &    &    &    &    &    \\
Perform Activity B     &    & 3  & 3  & 3  & 3  & 3  & 3  & 3  &    &    &    &    &    &    &    &    \\
Perform Activity C     &    &    &    & 4  & 4  & 4  &    &    &    &    &    &    &    &    &    &    \\
Perform Activity D     &    & 4  & 4  & 4  & 4  &    &    &    &    &    &    &    &    &    &    &    \\
Perform Activity E     &    &    &    &    &    &    & 3  & 3  & 3  & 3  & 3  & 3  & 3  & 3  &    &   \\
Perform Activity F     &    &    &    &    &    & 2  & 2  & 2  & 2  & 2  & 2  &    &    &    &    &    \\
Perform Activity G     &    &    &    &    &    &    &    &    & 3  & 3  & 3  & 3  &    &    &    &    \\
Perform Activity H     &    &    &    &    &    &    &    &    &    &    &    &    & 4  & 4  &    &   \\
\midrule
\textbf{Total Operands Allocated Per Period} & 0  & 9  & 9  & 11 & 11 & 9 & 8  & 8  & 8  & 8  & 8  & 6  & 7  & 7  &   &  \\\hline
\textbf{Total Operand Allocated in Project} & 0  & 9  & 9  & 13 & 13 & 15 & 18  & 18  & 21  & 21  & 21  & 21  & 25  & 25  &   &  \\
\bottomrule
\end{tabular}
\end{adjustbox}
\end{table}

\subsection{Model-Based Systems Engineering}\label{MBSE_SecIII}
To establish correspondence between the RCPSP and HFGT, it is necessary to describe an arbitrary AoN project network from an MBSE perspective.   To that end, the block definition diagram of the HFGT meta-architecture, presented in Fig.~\ref{Fig:LFESMetaArchitecture}, provides a generic and highly flexible starting point.  Again, recall that Fig.~\ref{Fig:LFESMetaArchitecture} depicts a \emph{meta-architecture}\cite{Schoonenberg:2019:ISC-BK04}, and so by definition of the application-domain-independent blocks must be related to their application-domain-specific concepts (in a project schedule).  Furthermore, Fig.~\ref{Fig:LFESMetaArchitecture}, also by definition, does not show each of the individual instances of its constituent blocks.  Turning to the top of the figure, the engineering system block represents the enterprise tasked with carrying out and managing the mega-project.   Next, because the RCPSP forces $\alpha_1=\emptyset$, the entire resource branch of Fig.~\ref{Fig:LFESMetaArchitecture} can be neglected.  Next, one instance of the operand block represents the mega-project itself because it is the enterprise's primary operand.  Then each instance of the operand transitions pertains to its associated activity in the AoN network.  Because the meta-architecture in Fig.~\ref{Fig:LFESMetaArchitecture} does not contain data attributes specific to a specific application domain, each instance of the operand transition block requires a data attribute for the activity duration $d_i$ and for each of the operand requirements ${\cal r}_{ik}$.  Next, operand places represent intermediate points of project execution before and after a task is executed and along the edges of the AoN project network.  (These are defined more specifically in Algorithm \ref{Alg:AoN-to-ACT} below.)  Finally, an additional operand place instance is used to manage the operand requirements (originally referred to as resource requirements in the RCPSP). 

\begin{algorithm}
\caption{AoN-to-ACT Construction for Renewable/Non-Renewable Operands}\label{Alg:AoN-to-ACT}
\begin{algorithmic}[1]
\Procedure{AoNtoACT}{$G=\{V,A\},D, S_{\cal R},C_{\cal R}$}
\State Create an initial node called ``Project Start".  
\State Create an action node called ``Finish Project" with a control for arc $f \in {\cal F}_c$ connected to a terminal node.    
\ForAll{$s_{\cal l} \in S_{\cal R}$}
\State Create a central buffer $s_l$ called operand requirement $s_l$ and annotate its availability $C_{\cal lRr}$ in the case of renewable operands and $C_{\cal lRn}$ in the case of non-renewable operands.  
\EndFor
\ForAll{${\cal v}_{\cal i} \in {\cal V}$}
\State Create action node ${\cal v}_{\cal i}$ with pins named \textit{OperandIn} and \textit{OperandOut}.  Annotate it with the duration $d_{\cal i}$.
\State Create a central buffer $s_i$ and add it to $S_{\cal A}$.
\State Create a object flow arc $f_{ii} \in {\cal F}_o$ from action node ${\cal v}_{\cal i}$ to central buffer ${s}_{\cal i}$.  Annotate it with a weight $|Succ(v_i)|$.
\ForAll{$s_{\cal l} \in S_{\cal R}$}
\State Create an object flow arc $f_{{\cal l}i} \in {\cal F}_o$ (in black) from the operand requirement central buffer $s_{\cal l}$ to the \textit{OperandIn} pin of action node ${\cal v}_{\cal i}$ and annotate it with a weight ${\cal r}_{\cal il}$. 
\EndFor
\If{Operands are renewable}
\ForAll{$s_{\cal l} \in S_{\cal R}$}
\State Create an object flow arc $f_{i{\cal l}} \in {\cal F}_o$ (in red) from the \textit{OperandOut} pin of action node $v_i$ to the operand requirement central buffers $s_{\cal l}$ and annotate it with a weight ${\cal r}_{\cal il}$.  
\EndFor
\EndIf
\EndFor
\ForAll{${\cal a}_{\cal ij} \in {\cal A}$}
\State Create an object flow $f_{\cal ij} \in {\cal F}_o$ from central buffer $s_{\cal i}$ to action node ${\cal v}_{\cal j}$.  Annotate it with a weight equal to one.
\EndFor
\ForAll{${\cal v}_{\cal i} \in {\cal V} | ^{\bullet}{\cal v}_{\cal i} = \emptyset$}
\State Create a control flow arc $f \in {\cal F}_c$ from the ``Project Start" initial node to action node ${\cal v}_{\cal i}$.  Annotate it with a weight equal to one.
\EndFor
\ForAll{$s_{\cal l} \in S_{\cal A} | s_i^{\bullet} = \emptyset$}
\State Create a control flow arc $f \in {\cal F}_c$ from the central buffer $s_i$ to the ``Finish Project" action node.  Annotate it with a weight equal to one.
\EndFor
\State \textbf{return} ACT diagram ${\cal N}_{ACT} = \{\cal{V},S_{\cal R}\cup S_{\cal A},\mathcal{F}_o,\mathcal{F}_c,\text{Project Start},\text{Finish Project},\textsf{Terminal}\}$.
\EndProcedure
\end{algorithmic}
\end{algorithm}

Once a correspondence between the elements of the AoN project network and the BDD of the HFGT meta-architecture has been established, Algorithm \ref{Alg:AoN-to-ACT} is used to create a SysML activity diagram (ACT) from its associated AoN project network.  Each step of Alg. \ref{Alg:AoN-to-ACT} is elaborated for clarity.
\begin{enumerate}
\item In Line 1, the function is defined from the data in the AoN project network.  Alternatively, the equivalent data can be pulled from the engineering system BDD.  
\item In Line 2, the project start initial node serves a similar role to the dummy start activity $v_o$.
\item In Line 3, the finish project activity serves a similar role to the dummy end activity $v_f$.
\item In Line 4, the for loop is required to account for projects with many types of operand requirements.  
\item In Line 5, each newly created central buffer manages each operand requirement (i.e., originally defined in the RCPSP as a resource requirement).
\item In Line 6, the loop over operand requirements is closed.  
\item In Line 7, the for loop only includes nodes of the AoN project network and neglects the dummy start and end nodes.  
\item In Line 8, each activity node in the RCPSP AoN network maps 1-to-1 to an action node of the same name in the ACT.  For simplicity, the notation ${\cal v}_{\cal i}$ is overloaded to describe both the AoN project network activity nodes as well as the activity diagram action nodes.  
\item In Line 9, each activity node in the RCPSP AoN network maps to 1-to-1 to a central buffer indicating the completion of this activity. For example, the activity ``Perform Activity A" gains a central buffer called ``Activity A Complete".   
\item In Line 10, this action node and the central buffer must then be connected 1-to-1 for logical coherence.  Importantly, the arc weight is equal to the number of successor nodes to $v_i$. 
\item In Line 11, a for loop is initiated over the set of central buffers pertaining to operand requirements.  
\item In Line 12, the creation of an object flow arc from the operand requirements central buffer to each action node means that the non-renewable operand requirement is modeled directly into the ACT (and not just as an operational parameter that later appears in an RCPSP constraint).  
\item In Line 13, the for loop over the operand requirement central buffers is closed.  
\item In Line 14, an if condition is initiated for the case of renewable (rather than non-renewable) operands.  
\item In Line 15, a for loop is initiated over the set of central buffers pertaining to operand requirements.  
\item In Line 16, similarly to Line 12, and if needed, the renewable operand requirement is modeled directly into the ACT.  
\item In Line 17, the for loop over the operand requirement central buffers is closed.  
\item In Line 18, the if condition is closed. 
\item In Line 19, the loop over activity nodes is closed.  
\item In Line 20, the for loop includes all of the arcs of the AoN project network. 
\item In Line 21, the precedence between a completed action and its succeeding action is established.  
\item In Line 22, the loop over AoN project network arcs is closed. 
\item In Line 23, the for loop includes all of the action nodes that have no preceding central buffer.  The notation $^{\bullet}()$ means ``preset of" as understood in the Petri net literature\cite{Murata:1989:01}.  
\item In Line 24, each action node with no preset receives an arc from the ``Project Start" initial node.  
\item In Line 25, the for loop over these action nodes is closed. 
\item In Line 26, the for loop includes all of the central buffers with no succeeding action.  The notation $()^{\bullet}$ means ``postset of" as understood in the Petri net literature\cite{Murata:1989:01}.  
\item In Line 27, each central buffer with no postset receives an arc to the ``Final Project" action.  
\item In Line 28, the for loop over these central buffers is closed. 
\item In Line 29, the activity diagram is returned as a tuple that includes the activity nodes, the central buffers, object flow arcs, the control flow arcs, the project start initial node, the finish project action, and the terminal node.
\item In Line 30, Alg. \ref{Alg:AoN-to-ACT} is concluded.  
\end{enumerate}

In the constructed ACT, the \emph{central buffer} represents the pool of operand tokens for a type of operand.  At the start of each activity, required tokens flow from the central buffer into the corresponding action node. These inflows are represented by black object flows, which are common to both renewable and non-renewable operands. The distinction becomes apparent upon completion: for renewable operands, red return flows transfer the seized tokens back to the central buffer, thereby restoring availability for future tasks. In contrast, non-renewable operands are consumed irreversibly, with no return flow. Control flows enforce the precedence relations among activities, while the object flows encode operand usage and release. Together, this structure allows the ACT to capture both temporal sequencing and the contrasting dynamics of renewable versus non-renewable operand constraints.

\subsection{Hetero-functional Graph Theory Definitions}\label{sec:HFGT_SecIII}
Once an arbitrary AoN project has been described from an MBSE perspective, the correspondence between the RCPSP and HFGT can be established.  More specifically, each of the definitions in Sec. \ref{Sec:HFGTIntro} is discussed in this context.

For Def.~\ref{Defn:D1 System Operand}, the \emph{operand} is the (one) project itself; viewed as it evolves from one activity to another over discrete time\cite{Hosseini:2025:ISC-J55}.  Additionally, the ``operand requirements" of the RCPSP (originally called resource requirements) imply an additional set of operands.  As noted in the $\alpha$-branch of the machine-scheduling taxonomy (Sec.~\ref{Sec:RCPSP-Background} and Sec.~\ref{MBSE_SecIII}), these operands are categorized as renewable, non-renewable, cumulative, etc.  These additional operands are, however, secondary in that they are part of the overall project (as an operand).  Note that in Fig. \ref{Fig:LFESMetaArchitecture}, the aggregation arrow from the operand block back to itself allows one or more operands to exist within another operand.  The elaboration of Defn. \ref{Defn:OperandNet} below discusses how these secondary operands are incorporated into the project's operand net.

For Defs.~\ref{Defn:D2 System Process}-\ref{Defn:ESN-STF}, they are not required by the RCPSP and are, therefore, not instantiated by the AoN project network in Fig. \ref{fig:SampleProjectNetwork}.  This perhaps surprising statement reflects the subtlety in HFGT modeling and is best understood by example.  Consider a manufacturing system with two identical operands: ``Metal Block A", and ``Metal Block B".  Each metal block has two states ``Without Hole Feature" and ``With Hole Feature" and one state transition that describes the flip from the former state to the latter state.  Additionally, this manufacturing system has two system resources: ``Drill Press A" and ``Drill Press B".  Both of these system resources can carry out exactly one system process called ``drill metal block".  This system process must not be confounded with a state transition that flips a metal block from its ``Without Hole Feature" state to its ``With Hole Feature" state.  The execution of the system process ``drill metal block" (at either drill press) does not necessitate a state transition in a specific metal block because there are two metal blocks that can receive this system process.  Similarly, the state transition in a specific metal block does not specify whether a system process was carried out by Drill Press A or B.  Because the relationship between state transitions in an operand is not related to system processes in a one-to-one fashion, they must be modeled as distinct elements.  Returning to correspondence betwen the RCPSP and HFGT, Defs.~\ref{Defn:D2 System Process}-\ref{Defn:ESN-STF} are elaborated:
\begin{itemize}
\item For Defn. \ref{Defn:D2 System Process}, there are no system processes that evolve or execute the project forward (as the system's only operand).  The project's activities that describe the transition from one project state to another must not confounded with the system processes.  
\item For Defn. \ref{Defn:SystemResource}, the are no system resources because the RCPSP forces $\alpha_1=\emptyset$.  
\item For Defn. \ref{Defn:D4 Buffer}, there are no buffers because there are no resources.  
\item For Defn. \ref{Defn:Capability}, there are no capabilities because there are no system processes or resources.  
\item Consequently, for Defn. \ref{Defn:D6 HFIT -ve}--\ref{Defn:ESN-STF}, the hetero-functional incidence tensors, engineering system net, and engineering system net state transition function collapse to triviality.  
\end{itemize}

\begin{figure}
    \centering
    \includegraphics[width=\linewidth]{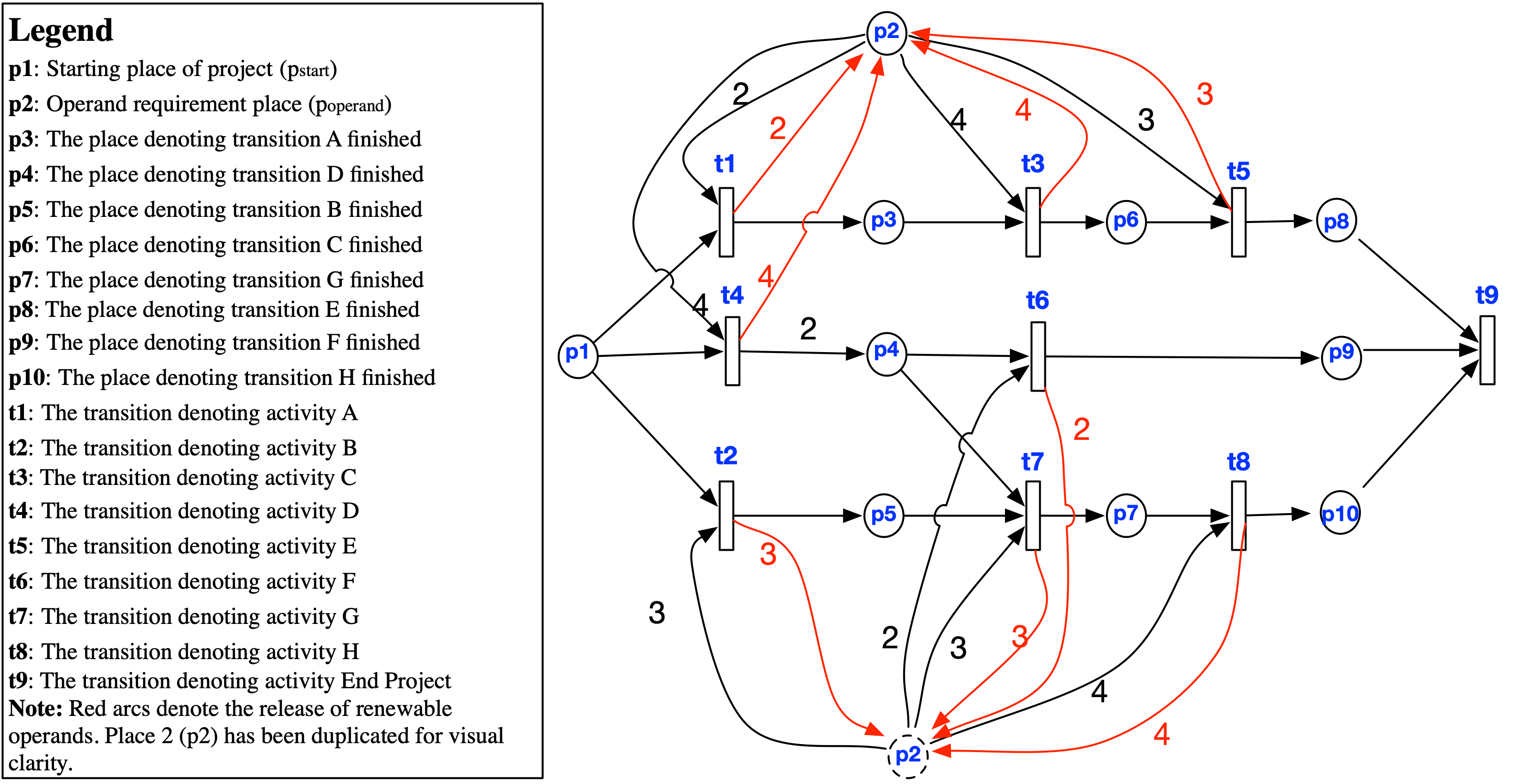}
    \caption{Project Operand Net}
    \label{fig:OperandNet}
\end{figure}

\begin{algorithm}
\caption{ACT-to-Operand Net Construction for Renewable/Non-Renewable Operands}\label{Alg:ACT-to-OperandNet}
\begin{algorithmic}[1]
\Procedure{ACTtoOperandNet}{${\cal N}_{ACT} = \{\cal{V},S_{\cal R}\cup S_{\cal A},\mathcal{F}_o,\mathcal{F}_c,\text{Project Start},\text{Finish Project},\textsf{Terminal}\}$}
\State Transform ``Start Project'' into an initial place $s_{1}$ and add it to $S$. Give it an initial marking $Q_{1l}=|s_1^{\bullet}|$
\State Delete the \textsf{Terminal} node and all arcs incident to it.
\ForAll{${\cal v}_{\cal i} \in \cal{V}$}
  \State Transform action ${\cal v}_{\cal i}$ into transition $\epsilon_{\cal i}$ and add it to ${\cal E}_l$.
\EndFor
\State Transform the action ``Finish Project'' into final transition $\epsilon_{\textsf{fin}}$.
\ForAll{$s_{\cal l} \in \cal{S}_{\cal R}$}
  \State Transform central buffer $s_{\cal l}$ into place $s_{\cal l}$ and add it to $S_{\cal R}$. 
  \If{If Operands are Renewable}
  \State Give place $s_{\cal l}$ an initial marking $Q_{{\cal l}l}=C_{\cal lRr}$.
  \Else 
  \State Give place $s_{\cal l}$ an initial marking $Q_{{\cal l}l}=C_{\cal lRn}$.
  \EndIf
\EndFor
\ForAll{$s_{\cal l} \in \cal{S}_{\cal A}$}
  \State Transform central buffer $s_{\cal l}$ into place $s_{\cal l}$ and add to $S_{\cal A}$. Give it an initial marking $Q_{{\cal l}l}=0$.
\EndFor
\Statex\textbf{Flow-to-arc rule:}
\ForAll{$f_{xy}\in \mathcal{F}_o\cup\mathcal{F}_c, y\neq\textsf{Terminal}$}
\State Replace the control or object flow $f_{xy}$ with an arc $\mathbf{M}_{xy} \in \textbf{M}_{l}$ (from a place to a transition or a transition to a place) that \emph{preserves direction, weight, and color (i.e., black inflows, red renewable returns)}. 
\EndFor
\State \textbf{return} Operand Net ${\cal N}_{l}= \{s_{1}\cup \cal{S}_{\cal R} \cup \cal{S}_{\cal A}, {\cal E}_{l}, \textbf{M}_{l}, W_{l}, Q_{l}\}$
\EndProcedure
\end{algorithmic}
\end{algorithm}

For Defn. \ref{Defn:OperandNet}, it becomes clear that the operand net becomes the central definition for creating the correspondence between the RCPSP and HFGT.  Once the AoN project network (e.g., Fig. \ref{fig:SampleProjectNetwork}) is transformed into an activity diagram via Alg. \ref{Alg:AoN-to-ACT}, it is then straightforwardly transformed into an operand net with Alg. \ref{Alg:ACT-to-OperandNet}.  Note that the secondary operands, discussed under Defn. \ref{Defn:D1 System Operand}, become the places $S_{\cal R}$ in this new operand net.  As a result, the activity diagram shown in Fig. \ref{fig:SampleProjectACT} becomes the operand net shown in Fig. \ref{fig:OperandNet}.  Finally, all of the arcs of the activity diagram \emph{with their associated weights} are retained as arcs in the operand net without change.  Note that in order to reflect the role of renewable operands, the red arcs to the required operand buffer in Fig. \ref{fig:SampleProjectACT} are depicted similarly in Fig. \ref{fig:OperandNet}.  The primary advantage of the operand net over the AoN project network is the explicit representation of project state over time. As discussed in Sec. \ref{Sec:Discussion}, this state-based representation can substantially aid project monitoring and control. 

For Defn. \ref{Defn:OperandNet-STF}, the operand net state transition function serves to track and evolve the project state at each time step from its initial place to its final transition as its \emph{makespan}.  For example, the general form of the operand net state transition function in Eq. \ref{EQ:ON-STF} becomes Eq. \ref{Eq:HFNMCF_OperandNet-STF1_Matrix} when applied to the operand net shown in Fig. \ref{fig:OperandNet}.  (To describe renewable operands, the red entries in the operand net incidence matrix correspond to the red arcs in Fig. \ref{fig:OperandNet}.)
\begin{align}\label{Eq:HFNMCF_OperandNet-STF1_Matrix}
Q_{Sl}[k+1]&=Q_{Sl}[k]
+\underbrace{\begin{bmatrix}
0 & 0 & 0 & 0 & 0 & 0 & 0 & 0 & 0\\
\textcolor{red}{2} & \textcolor{red}{3} & \textcolor{red}{4} & \textcolor{red}{4} & \textcolor{red}{3} & \textcolor{red}{2} & \textcolor{red}{3} & \textcolor{red}{4} & \textcolor{red}{0}\\
1 & 0 & 0 & 0 & 0 & 0 & 0 & 0 & 0\\
0 & 0 & 0 & 2 & 0 & 0 & 0 & 0 & 0\\
0 & 1 & 0 & 0 & 0 & 0 & 0 & 0 & 0\\
0 & 0 & 1 & 0 & 0 & 0 & 0 & 0 & 0\\
0 & 0 & 0 & 0 & 0 & 0 & 1 & 0 & 0\\
0 & 0 & 0 & 0 & 1 & 0 & 0 & 0 & 0\\
0 & 0 & 0 & 0 & 0 & 1 & 0 & 0 & 0\\
0 & 0 & 0 & 0 & 0 & 0 & 0 & 1 & 0
\end{bmatrix}}_{\text{output incidence } M^{+}}
U_{l}^+[k]\Delta T
-\underbrace{\begin{bmatrix}
1 & 1 & 0 & 1 & 0 & 0 & 0 & 0 & 0\\
2 & 3 & 4 & 4 & 3 & 2 & 3 & 4 & 0\\
0 & 0 & 1 & 0 & 0 & 0 & 0 & 0 & 0\\
0 & 0 & 0 & 0 & 0 & 1 & 1 & 0 & 0\\
0 & 0 & 0 & 0 & 0 & 0 & 1 & 0 & 0\\
0 & 0 & 0 & 0 & 1 & 0 & 0 & 0 & 0\\
0 & 0 & 0 & 0 & 0 & 0 & 0 & 1 & 0\\
0 & 0 & 0 & 0 & 0 & 0 & 0 & 0 & 1\\
0 & 0 & 0 & 0 & 0 & 0 & 0 & 0 & 1\\
0 & 0 & 0 & 0 & 0 & 0 & 0 & 0 & 1
\end{bmatrix}}_{\text{input incidence } M^{-}}
U_{l}^-[k]\Delta T
\end{align}
\noindent Together, Eqs. \ref{Eq:HFNMCF_OperandNet-STF1_Matrix} and Eq. \ref{EQ:ON-STF} reveal several important points.
\begin{itemize}
\item The input firing vector $U_l^-[k]$ tracks when each project activity begins. 
\item The output firing vector $U_l^+[k]$ tracks when each project activity ends.
\item The marking vector $Q_{Sl}[k]$ tracks when the preconditions of each project activity have been sufficiently met.
\item At the project start:
\begin{itemize}
\item In the case of renewable operands, the marking vector $Q_{Sl}[1]=[|s_i^{\bullet}|; C_{l{\cal R}r}; \mathbf{0}]$ so that there are sufficient tokens in $s_1$, $S_{\cal R}$, and $S_{\cal A}$ respectively for the project to begin.  
\item In the case of non-renewable operands, the marking vector $Q_{Sl}[1]=[|s_i^{\bullet}|; C_{l{\cal R}n}; \mathbf{0}]$ so that there are sufficient tokens in $s_1$, $S_{\cal R}$, and $S_{\cal A}$ respectively for the project to begin.  
\end{itemize}
\item At the project end, 
\begin{itemize}
\item In the case of renewable operands, the marking vector $Q_{Sl}[K+1]=\mathbf{0}$, emphasizing that all tasks have been completed and that all non-renewable operands have been consumed.  
\item In the case of non-renewable operands, the marking vector $Q_{Sl}[K+1]=[0; C_{\cal Rr}; \mathbf{0}]$, emphasizing that all tasks have been completed but that all renewable operands have been restored to their original values.  
\end{itemize}
\item The marking vector $Q_{{\cal E}l}[k]$ tracks when each project activity is still ongoing.  
\end{itemize}
Together, all four of these quantities constitute a \emph{mutually exclusive and collectively exhaustive} description of a project's logical state.

\subsection{RCPSP Specialization of the HFNMCF Problem to the RCPSP}\label{Sec:HFNMCF_RCPSP}

Given the correspondence between the RCPSP and the HFGT definitions in the previous section, the HFNMCF problem can be solved straightforwardly.  In this respect, it becomes clear that the complete form of the HFNMCF problem found in Eq. \ref{Eq:HFNMCF_ObjFunc}--\ref{Eq:HFNMCF_DeviceModels2} is overly general in the context of the RCPSP.  Instead, the following context-dependent limitations are inherited from the RCPSP.  
\begin{itemize}
\item For simplicity, and without loss of generality, define $\Delta T\;\equiv\;1$
\item $L = \{l\}$ because only one project is being scheduled.  
\item  $x[k] = \begin{bmatrix}Q_{Sl}; U^-_l ; U^+_l \end{bmatrix}[k] \forall k \in \{1, \dots, K\}$ because the RCPSP does not have an engineering system net or its associated state transition function.  
\item $U^-_l[k] \in \{0,1\}^{|{\cal E}_l|}\forall k \in \{1, \dots, K\}$ because initiation of a project transition is a binary condition.  
\item $U^+_l[k] \in \{0,1\}^{|{\cal E}_l|}\forall k \in \{1, \dots, K\}$ because the completion of a project transition is a binary condition.  
\item $Q_{Sl}[k] \in \mathbb{Z}^{+|S_l|}\forall k \in \{1, \dots, K\}$ because the operand net incidence matrix consists of exclusively positive values.  
\item In Eq. \ref{Eq:HFNMCF_ObjFunc}, $F_{QP}[k]=0 \forall k$ because the RCPSP utilizes a linear rather than quadratic objective function.  For a minimal project makespan, $f_{QP}[k]=[\mathbf{0}; k\cdot e_n; \mathbf{0}] \quad \forall k$ where the first column zero corresponds to the $Q_{sl}[k]$, the $n^{th}$ elementary basis vector $e_n$ corresponds to $U^{-}_{l}[k]$, and the last column zero corresponds to $U^{+}_{l}[k]$.  
\item Eqs. \ref{Eq:HFNMCF_ESN-STF1} --\ref{Eq:HFNMCF_DurationConstraint} collapse to triviality because the RCPSP does not have an engineering system net or its associated state transition function.  
\item Eq. \ref{Eq:HFNMCF_OperandNet-STF1} (or more specifically Eq. \ref{Eq:HFNMCF_OperandNet-STF1_Matrix}) is retained without change.
\item Eq. \ref{Eq:HFNMCF_OperandNet-STF2} can be omitted from the optimization because no additional constraints are placed on the $Q_{{\cal E}l_i}[k]$ variables. Therefore, they can be easily calculated after the optimization problem has been solved (as demonstrated in Sec. \ref{Sec:Solution_HFNMCF_RCPSP}).
\item Eq. \ref{Eq:HFNMCF_OperandNetDurationConstraint} is retained without change.  
\item Eqs. \ref{Eq:HFNMCF_SyncPlus}-\ref{Eq:HFNMCF_ESN-Exogenous} also collapse to triviality because the RCPSP does not have an engineering system net or its associated state transition function.  
\item Eq. \ref{Eq:HFNMCF_OperandNet-Exogenous} is not required because the RCPSP does not place any exogenous constraints on which activities are executed or when.  $E_{L_p}=E_{L_n}=0$.  
\item Eq. \ref{Eq:HFNMCF_InitCond} is retained, but without the $Q_B[1]$ and $Q_{\cal E}[1]$ variables. As explained in the previous section, $Q_{Sl}[1]=[|s_i^{\bullet}|; C_{l{\cal R}r}; \mathbf{0}]$ in the case of renewable operands.  $Q_{Sl}[1]=[|s_i^{\bullet}|; C_{l{\cal R}r}; \mathbf{0}]$ in the case of non-renewable operands.  
\item Eq. \ref{Eq:HFNMCF_FinalCond} is retained, but without the $Q_B[K+1]$, $Q_{\cal E}[K+1]$, and $U^-[K+1]$ variables.  As explained in the previous section, $Q_{Sl}[K+1]=[0; C_{\cal Rr}; \mathbf{0}]$ in the case of renewable operands.  $Q_{Sl}[K+1]=\mathbf{0}$ in the case of non-renewable operands.
\item In Eq. \ref{Eq:HFNMCF_Capacity}, $\underline{E}_{CP}=0$ to impose non-negativity of decision variables.  $D(X)=DX$ where $D=I$ so that lower bounds are placed on the decision variables individually.  Finally, as Sec. \ref{Sec:Discussion} discusses, the upper bound constraint is not required.  
\item Eqs. \ref{Eq:HFNMCF_DeviceModels1} and \ref{Eq:HFNMCF_DeviceModels2} are not required because the RCPSP does not have device model constraints.  
\end{itemize}
 
As a result, the HFNMCF problem when applied to the RCPSP context becomes: 
\begin{subequations}
\begin{alignat}{2}
Z_{RCPSP} =& \sum_{k=1}^{K-1} [k\cdot e_n]^T U^{-}_{l}&&[k] \label{Eq:ObjFunc_HFNMCF_RCPSP} \\
\text{s.t.} -Q_{Sl}[k+1]+Q_{Sl}[k]+{M}_{l}^+U_{l}^+[k] - {M}_{l}^-U_{l}^-[k] = &0 && \forall k \in \{1, \dots, K\} \label{Eq:OperandNet-STF1_HFNMCF_RCPSP}\\
-U_{xl}^+[k+d_{xl}]+ U_{xl}^-[k] = &0 &&  \forall k\in \{1, \dots, K\}, \: \forall x\in \{1, \dots, |{\cal E}_{l}|\} \label{Eq:OperandNetDurationConstraint_HFNMCF_RCPSP}\\ 
Q_{Sl}[1] =& C_{{Sl}1} \label{Eq:InitCond_HFNMCF_RCPSP}\\ 
\begin{bmatrix}  Q_{Sl} ; U_l^- \end{bmatrix}[K+1] =   &\begin{bmatrix} C_{{Sl}K} ; \mathbf{0} \end{bmatrix} \label{Eq:FinalCond_HFNMCF_RCPSP}\\ 
0 \leq \begin{bmatrix}Q_{Sl}; U^-_l ; U^+_l \end{bmatrix}[k] \;\;\;\; &  && \forall k \in \{1, \dots, K\} \label{Eq:Capacity_HFNMCF_RCPSP}
\end{alignat}
\end{subequations}
where:
\begin{itemize}
\item Eq.~\ref{Eq:ObjFunc_HFNMCF_RCPSP} minimizes the project makespan.  
\item Eq.~\ref{Eq:OperandNet-STF1_HFNMCF_RCPSP} applies the operand net state transition function as defined in Eq. \ref{Eq:HFNMCF_OperandNet-STF1} (and Eq. \ref{Eq:HFNMCF_OperandNet-STF1_Matrix} more specifically).  
\item Eq.~\ref{Eq:OperandNetDurationConstraint_HFNMCF_RCPSP} enforces transition durations so that an activity ends $d_{\cal i}$ timesteps after it starts.  
\item Eq.~\ref{Eq:InitCond_HFNMCF_RCPSP} establishes the initial conditions of the project operand net, ensuring that the initial node has enough tokens to enable its postset transitions and that the operand requirement places $S_{\cal R}$ have enough operands at the project start.  
\item Eq.~\ref{Eq:FinalCond_HFNMCF_RCPSP} establishes the final conditions of the project operand net, ensuring that the project ends by exhausting all of its non-renewable operands or fully replenishing its renewable ones. 
\item Finally, Eq.~\ref{Eq:Capacity_HFNMCF_RCPSP} imposes capacity constraints on the decision variable at each timestep.  The upper bound on the operand net places represents the availability of renewable and non-renewable operands.  The upper bound on the operand transitions ensures that each is only executed once at a given time.    
\end{itemize}
Together, Eqs. \ref{Eq:ObjFunc_HFNMCF_RCPSP}-\ref{Eq:Capacity_HFNMCF_RCPSP} describe an expressive RCPSP formulation within the HFNMCF problem.

\subsection{Solution of RCPSP Specialization of the HFNMCF Problem}\label{Sec:Solution_HFNMCF_RCPSP}
Given the explanation of the previous section, for the case of renewable operands, the RCPSP specialization of HFNMCF problem found in Eq. \ref{Eq:ObjFunc_HFNMCF_RCPSP}--\ref{Eq:Capacity_HFNMCF_RCPSP} is solved to optimality.  The optimal values of the column vectors $U_l^{-*}[k]$ and $U_l^{+*}[k]$ are horizontally concatenated and presented below.  
\begin{footnotesize}
\begin{align}\label{Eq:OptimalU}
U_{l}^{-*} =
\left[
\begin{array}{*{17}{r|}r}
1 &   &   &   &   &   &    &    &    &    &    &    &    &    &    &    &   &  \\\hline
  &   & 1 &   &   &   &    &    &    &    &    &    &    &    &    &    &   &  \\\hline
  &   &   &   & 1 &   &    &    &    &    &    &    &    &    &    &    &   &  \\\hline
1 &   &   &   &   &   &    &    &    &    &    &    &    &    &    &    &   &  \\\hline
  &   &   &   &   &   &    & 1  &    &    &    &    &    &    &    &    &   &  \\\hline
  &   &   &   &   &   &    & 1  &    &    &    &    &    &    &    &    &   &  \\\hline
  &   &   &   &   &   &    &    &    & 1  &    &    &    &    &    &    &   &  \\\hline
  &   &   &   &   &   &    &    &    &    &    &    &    & 1  &    &    &   &  \\\hline
  &   &   &   &   &   &    &    &    &    &    &    &    &    &    & 1  &   &  \\
\end{array}
\right],
U_{l}^{+*} =
\left[
\begin{array}{*{17}{r|}r}
  &   & 1 &   &   &   &    &    &    &    &    &    &    &    &    &    &   &  \\\hline
  &   &   &   &   &   &    &    &    & 1  &    &    &    &    &    &    &   &  \\\hline
  &   &   &   &   &   &    &  1 &    &    &    &    &    &    &    &    &   &  \\\hline
  &   &   &   & 1 &   &    &    &    &    &    &    &    &    &    &    &   &  \\\hline
  &   &   &   &   &   &    &    &    &    &    &    &    &    &    &  1 &   &  \\\hline
  &   &   &   &   &   &    &    &    &    &    &    &    &    &  1 &    &   &  \\\hline
  &   &   &   &   &   &    &    &    &    &    &    &    &    &  1 &    &   &  \\\hline
  &   &   &   &   &   &    &    &    &    &    &    &    &    &    &  1 &   &  \\\hline
  &   &   &   &   &   &    &    &    &    &    &    &    &    &    &  1 &   &  \\
\end{array}
\right]
\end{align}
\end{footnotesize}
\noindent Consequently, each row corresponds to a transition, and each column corresponds to a time index.  Eq. \ref{Eq:OptimalU} shows when each project activity (or transition) begins with $U_l^{-*}[k]$ and ends with $U_l^{+*}[k]$.  Quite notably, the optimal solution $U_{l}^{-*}$ (from the HFNMCF problem) is equivalent to ${\cal x}^{*}$ (in the RCPSP).  In the meantime, the HFNMCF problem explicitly produces the project finish times in $U_l^{+*}[k]$ whereas the RCPSP states them implicitly in Eq. \ref{Eq:RCPSP_constraint2}.  Additionally, the optimal value $Q^{*}_{{\cal E}l[k]}$ can be calculated after the optimization via Eq. \ref{Eq:HFNMCF_OperandNet-STF2}.  
\begin{footnotesize}
\begin{align}\label{Eq:OptimalQEl}
Q^{*}_{{\cal E}l} =
\left[
\begin{array}{*{18}{r|}r}
1 & 1 &   &   &   &   &    &    &    &    &    &    &    &    &    &    &   &  &   \\\hline
  &   & 1 & 1 & 1 & 1 & 1  & 1  & 1  &    &    &    &    &    &    &    &   &  &   \\\hline
  &   &   &   & 1 & 1 & 1  &    &    &    &    &    &    &    &    &    &   &  &   \\\hline
1 & 1 & 1 & 1 &   &   &    &    &    &    &    &    &    &    &    &    &   &  &   \\\hline
  &   &   &   &   &   &    & 1  & 1  & 1  & 1  & 1  & 1  & 1  & 1  &    &   &  &   \\\hline
  &   &   &   &   &   &    & 1  & 1  & 1  & 1  & 1  & 1  &    &    &    &   &  &   \\\hline
  &   &   &   &   &   &    &    &    & 1  & 1  & 1  & 1  &    &    &    &   &  &   \\\hline
  &   &   &   &   &   &    &    &    &    &    &    &    & 1  & 1  &    &   &  &   \\\hline
  &   &   &   &   &   &    &    &    &    &    &    &    &    &    &    &   &  &   \\
\end{array}
\right]
\end{align}
\end{footnotesize}
\noindent Again, recall that Sec. \ref{Sec:HFNMCF_RCPSP} explains that Eq. \ref{Eq:HFNMCF_OperandNet-STF2} does not need to be retained in the RCPSP-specialization of the HFNMCF problem because it can be calculated after the solution of the optimization program.  Quite notably, the optimal value $Q^{*}_{{\cal E}l[k]}$ is equivalent to the results found in Table \ref{tab:RCPSP-schedule-renewable} when presented in binary form.  This shows that the HFNCMF problem explicitly describes when each task is ongoing while this state must be inferred by the RCPSP from the quantity $\sum_{\kappa=k-d_i}^{\kappa=k} {\cal x}_{{\cal i}\kappa} \forall k \in \{1, \dots, K\}$.  Finally, the optimal values of $Q^{*}_{Sl}[k]$ are horizontally concatenated and presented below.  
\begin{footnotesize}
\begin{align}\label{Eq:OptimalQSl}
Q_{S_l} =
\left[
\begin{array}{*{18}{r|}r}
3 & 1 & 1 &   &   &   &   &   &   &   &   &   &   &   &   &   &   &   &  \\\hline
8 & 2 & 2 & 1 & 1 & 1 & 1 &  1&   &   &   &   &   &   & 1 & 1 & 8 & 8 & 8 \\\hline
  &   &   & 1 & 1 &   &   &   &   &   &   &   &   &   &   &   &   &   &   \\\hline
  &   &   &   &   & 2 & 2 & 2 & 1 & 1 &   &   &   &   &   &   &   &   &   \\\hline
  &   &   &   &   &   &   &   &   &   &   &   &   &   &   &   &   &   &   \\\hline
  &   &   &   &   &   &   &   &   &   &   &   &   &   &   &   &   &   &   \\\hline
  &   &   &   &   &   &   &   &   &   &   &   &   &   &   &   &   &   &   \\\hline
  &   &   &   &   &   &   &   &   &   &   &   &   &   &   &   &   &   &   \\\hline
  &   &   &   &   &   &   &   &   &   &   &   &   &   & 1 & 1 &   &   &   \\\hline
  &   &   &   &   &   &   &   &   &   &   &   &   &   &   &   &   &   &   
\end{array}
\right]
\end{align}
\end{footnotesize}
\noindent Tokens in the first place $p_{start}$ show the number of initial activities that have yet to be started.  Here, one token remains in $p_{start}$ at $k=2$ and $k=3$ because the second transition does not begin until $k=3$.  Meanwhile, tokens in the second row, $p_{operand}$, denote the availability of (renewable) operands at every time step. Here, the eight renewable operands are engaged through the parallel execution of simultaneous transitions, but are ultimately returned to $p_{operand}$ as each of these transitions finishes.  Lastly, the presence of nonzero values in the remaining places indicates tokens that are \emph{waiting} to be utilized by succeeding transitions.  This indicates inefficiency in the project schedule because there is slack time between the completion time of one project activity and the start time of its successor activity.  In this case, $p_3$, $p_4$, and $p_9$ are involved in these waiting times.  Again, the RCPSP formulation does not explicitly state these waiting times, although they are present \emph{implicitly}.  When Eq. \ref{Eq:RCPSP_constraint2} is rewritten as an equality constraint with slack times $T_{\cal ij}$, 
\begin{align}\label{Eq:RCPSP_constraint2_w/Slack}
\left(\sum_{k = 1}^{K} k \cdot {\cal x}_{ik} + d_{\cal i} \right) + T_{\cal ij} = \sum_{k = 1}^{K} k \cdot {\cal x}_{jk} \quad \forall ({\cal i}, {\cal j}) \in {\cal A} 
\end{align}
Because each place $s_{\cal l}$ exists along an AoN project network arc ${\cal a}_{ij} \in {\cal A}$, it becomes clear that the nonzero durations in $Q_{S_l}$ map one-to-one to these slack times.  

In all, the results presented in Eqs. \ref{Eq:OptimalU}-\ref{Eq:OptimalQSl} provide numerical evidence that the RCPSP specialization of the HFNMCF problem (in Eq. \ref{Eq:ObjFunc_HFNMCF_RCPSP}-\ref{Eq:Capacity_HFNMCF_RCPSP}) is equivalent to the RCPSP stated in Eqs. \ref{Eq:RCPSP_Objective}-\ref{Eq:RCPSP_constraint3} for the cases of renewable operands.  The results for the case of the non-renewable operands are presented in Eq. \ref{Eq:OptimalU2}-\ref{Eq:OptimalQ2}.  Again, they show equivalence to their RCPSP counterparts in Eq. \ref{Eq:RCPSP_OptimalXNonRenewable} and Table \ref{tab:RCPSP-schedule-nonrenewable}.  The results from both examples are elaborated upon in the discussion section that follows.  

\section{Discussion: The Relative Merits of the RCPSP Specialization of the HFNMCF Problem}\label{Sec:Discussion}

The previous section concretely demonstrated the relationship between the RCPSP and the RCPSP specialization of the HFNMCF problem.  In both cases of renewable and non-renewable operands, the solution to the RCPSP was equivalent to the RCPSP specialization of the HFNMCF problem.  In effect, both minimize a project objective (e.g., project makespan) subject to precedence constraints, operand requirement constraints, and project activity completion constraints.  Consequently, while the numerical evidence in the previous section is compelling from a pedagogical perspective, the pattern of results points to a more general result.  
\begin{thm}\label{Thm:HFNMCF-RCPSP}
The RCPSP specialization of the HFNMCF problem in Eqs. \ref{Eq:ObjFunc_HFNMCF_RCPSP}--\ref{Eq:Capacity_HFNMCF_RCPSP} is a generalization of the RCPSP in Eqs. \ref{Eq:RCPSP_Objective}--\ref{Eq:RCPSP_constraint4}.
\end{thm}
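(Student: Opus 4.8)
The plan is to prove the theorem by exhibiting an explicit, objective-preserving correspondence between the feasible sets of the two programs and then checking that the HFNMCF specialization admits feasible instances with no RCPSP analogue, so that the containment is strict (which is what ``generalization'' means here: every RCPSP instance embeds with matching feasible set and objective, but not conversely). Fix an arbitrary RCPSP instance $(G,D,S_{\cal R},C_{\cal R})$, in either the renewable or the non-renewable variant, and let $({\cal N}_{ACT},{\cal N}_l)$ be the activity diagram and operand net produced by Algorithms~\ref{Alg:AoN-to-ACT} and~\ref{Alg:ACT-to-OperandNet}. First I would set up the variable map: identify the RCPSP binary matrix ${\cal x}=[{\cal x}_{ik}]$ with the HFNMCF input firing matrix $U_l^-=[U_{xl}^-[k]]$ (one transition $\epsilon_i$ per activity ${\cal v}_i$, one column per time index), then obtain $U_l^+$ from the duration constraint Eq.~\ref{Eq:OperandNetDurationConstraint_HFNMCF_RCPSP} and obtain $Q_{Sl},Q_{{\cal E}l}$ from the state-transition recursions Eqs.~\ref{Eq:OperandNet-STF1_HFNMCF_RCPSP} and~\ref{Eq:HFNMCF_OperandNet-STF2} together with the initial condition Eq.~\ref{Eq:InitCond_HFNMCF_RCPSP}. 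Since $M_l^+,M_l^-\ge 0$, all of $Q_{Sl},Q_{{\cal E}l},U_l^\pm$ are uniquely determined once ${\cal x}$ is chosen, so the map is a well-defined injection; conversely, reading ${\cal x}:=U_l^-$ off any HFNMCF-feasible point gives surjectivity onto the RCPSP-feasible set once the constraint matching below is established.

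Second, I would match the constraint families one at a time. (i) \emph{Starts exactly once} (Eq.~\ref{Eq:RCPSP_constraint1}): the initial marking $Q_{Sl}[1]=[\,|s_1^\bullet|;\,C_{\cal R};\,\mathbf 0\,]$, the arc weights ($|Succ({\cal v}_i)|$ tokens deposited in completion place $s_i$ when $\epsilon_i$ fires, one token withdrawn from $s_i$ by each successor transition), the final condition Eq.~\ref{Eq:FinalCond_HFNMCF_RCPSP} (which forces the start place and every $S_{\cal A}$ place empty and $U_l^-[K{+}1]=\mathbf 0$), and non-negativity Eq.~\ref{Eq:Capacity_HFNMCF_RCPSP} are jointly satisfiable iff every transition fires exactly once over the horizon; an induction along a topological order of $G$ supplies this. (ii) \emph{Precedence} (Eq.~\ref{Eq:RCPSP_constraint2}): unrolling the $S_{\cal A}$-rows of Eq.~\ref{Eq:OperandNet-STF1_HFNMCF_RCPSP} with Eq.~\ref{Eq:OperandNetDurationConstraint_HFNMCF_RCPSP} shows the token of $s_i$ consumed by $\epsilon_j$ cannot be withdrawn before step $(\arg\max_k {\cal x}_{ik})+d_i$, which is exactly Eq.~\ref{Eq:RCPSP_constraint2}, with slack $T_{ij}$ realized as the token's dwell time in $s_i$, as already observed in Sec.~\ref{Sec:Solution_HFNMCF_RCPSP} via Eq.~\ref{Eq:RCPSP_constraint2_w/Slack}. (iii) \emph{Operand feasibility}: unrolling the $S_{\cal R}$-row of Eq.~\ref{Eq:OperandNet-STF1_HFNMCF_RCPSP} gives, with the red return arcs present, $Q_{s_l}[k]=C_{\cal lRr}-\sum_i\sum_{\kappa=k-d_i}^{k}{\cal r}_{il}{\cal x}_{i\kappa}$, so $Q_{s_l}[k]\ge 0$ is precisely Eq.~\ref{Eq:RCPSP_constraint3}; without the return arcs, $Q_{s_l}[k]=C_{\cal lRn}-\sum_i\sum_{\kappa=1}^{k}{\cal r}_{il}{\cal x}_{i\kappa}$, so $Q_{s_l}[k]\ge 0$ is precisely Eq.~\ref{Eq:RCPSP_constraint4}. (iv) \emph{Objective}: with $f_{QP}[k]=[\mathbf 0;\,k\,e_n;\,\mathbf 0]$ the cost Eq.~\ref{Eq:ObjFunc_HFNMCF_RCPSP} equals $\sum_k k\,{\cal x}_{nk}$, identical to Eq.~\ref{Eq:RCPSP_Objective}. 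Hence the feasible sets are in objective-preserving bijection, so the optima coincide and optimal ${\cal x}^\ast$ maps to optimal $(Q_{Sl},U_l^-,U_l^+)$ and back, in agreement with the numerical witnesses in Eqs.~\ref{Eq:RCPSP_OptimalXRenewable} and~\ref{Eq:OptimalU}.

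Third, to upgrade ``equivalence'' to ``generalization'' I would exhibit HFNMCF-specialization instances no RCPSP instance of the form Eqs.~\ref{Eq:RCPSP_Objective}--\ref{Eq:RCPSP_constraint4} can express: for example an operand net that simultaneously contains a resource place with red return arcs \emph{and} one without, modeling a project that seizes-and-returns renewable operands while also consuming non-renewable ones, a case lying outside both the $\emptyset,m,1|cpm|C_{\max}$ and $\emptyset,m,T|cpm|C_{\max}$ families; more generally, any operand net whose incidence structure is not of the star shape generated by Algorithms~\ref{Alg:AoN-to-ACT}--\ref{Alg:ACT-to-OperandNet}. Since every RCPSP instance embeds by steps one and two, but such instances have no RCPSP preimage, the specialization strictly contains the RCPSP.

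I expect the main obstacle to be part (i): the RCPSP enforces ``exactly one start'' by an explicit equality, whereas the operand net enforces it only implicitly through token counts and boundary conditions, so the induction must show this mechanism is both \emph{sound} (no transition can fire twice, since the $|Succ|$-weighted completion tokens and the single start token are exactly consumed) and \emph{complete} (every transition must fire, else some completion place or the start place cannot be emptied by step $K{+}1$), and it must handle the degenerate dummy activities ${\cal v}_o,{\cal v}_f$ of zero duration and zero operand demand. A secondary subtlety is horizon bookkeeping: choosing $K$ large enough and aligning the index shift $k\mapsto k+d_i$ so that the telescoping of $-Q_{Sl}[k{+}1]+Q_{Sl}[k]$ in (ii)--(iii) matches the $\kappa=k-d_i$ windows of Eqs.~\ref{Eq:RCPSP_constraint3}--\ref{Eq:RCPSP_constraint4} and so that the $\sum_{k=1}^{K-1}$ in Eq.~\ref{Eq:ObjFunc_HFNMCF_RCPSP} captures the same terms as the $\sum_{k=1}^{K}$ in Eq.~\ref{Eq:RCPSP_Objective}.
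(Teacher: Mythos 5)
Your proposal follows essentially the same route as the paper's proof in its core: identify ${\cal x}_{{\cal i}k}$ with $U^{-}_{xl}[k]$, and then recover each RCPSP constraint family by unrolling the operand-net state-transition recursion Eq.~\ref{Eq:OperandNet-STF1_HFNMCF_RCPSP} together with the duration constraint Eq.~\ref{Eq:OperandNetDurationConstraint_HFNMCF_RCPSP}, the boundary conditions Eqs.~\ref{Eq:InitCond_HFNMCF_RCPSP}--\ref{Eq:FinalCond_HFNMCF_RCPSP}, and the non-negativity bound Eq.~\ref{Eq:Capacity_HFNMCF_RCPSP}; your items (i)--(iv) correspond one-to-one with the paper's bullet-by-bullet treatment of Eqs.~\ref{Eq:RCPSP_Objective}--\ref{Eq:RCPSP_constraint4}. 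Where you genuinely diverge is at the final ``generalization'' step. The paper concludes from the variable containment ${\cal x}\subset X$ (all additional variables being determined by ${\cal x}$), together with a check that no unused constraint can further restrict $U^{-*}_{l}$; you instead argue for an objective-preserving bijection of feasible sets and then establish \emph{strictness} by exhibiting operand nets with no RCPSP preimage (e.g., a net mixing renewable and non-renewable operand places, which lies outside both the $\emptyset,m,1$ and $\emptyset,m,T$ families). Your version is the more demanding one to execute --- the two-directional feasibility correspondence and the topological-order induction for ``fires exactly once'' require more bookkeeping than the paper's one-directional necessitation, and you correctly flag the horizon/index-shift alignment and the dummy activities as the places where that bookkeeping could slip --- but it buys a sharper conclusion: it locates the strictness of the containment at the level of expressible problem instances rather than at the level of decision variables, which is closer to what ``generalization'' should mean. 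Both arguments are sound; the paper's is lighter, yours is more informative about why the generalization is proper.
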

\begin{proof}
The strategy of the proof is to first recognize the commonality of decision variables and indices, and then show how Eqs. \ref{Eq:ObjFunc_HFNMCF_RCPSP}--\ref{Eq:Capacity_HFNMCF_RCPSP} necessitate Eqs. \ref{Eq:RCPSP_Objective}--\ref{Eq:RCPSP_constraint4}.
\begin{itemize}[label=$\bullet$]
\item By Alg. \ref{Alg:AoN-to-ACT} and \ref{Alg:ACT-to-OperandNet}, the node index $\cal i$ is equivalent to the transition index $x$.  ${\cal i}=x$.
\item By definition, ${\cal x}_{{\cal i}k} \;\equiv\; U^{-}_{xl}[k] \quad \forall x,k$.  (For clarity, $\cal x$ is the decision variable in the RCPSP and $\chi$ is the operand net transition index as shown in the Nomenclature section.)
\item For Eq. \ref{Eq:RCPSP_Objective}, Eq. \ref{Eq:ObjFunc_HFNMCF_RCPSP} is an algebraically equivalent restatement in matrix form.  
\item For Eq. \ref{Eq:RCPSP_constraint1}, the state transition function in Eq. \ref{Eq:OperandNet-STF1_HFNMCF_RCPSP}, the initial condition in \ref{Eq:InitCond_HFNMCF_RCPSP}, and the final condition \ref{Eq:FinalCond_HFNMCF_RCPSP} guarantee a solution where $\sum_{k=1}^{K+1}U_{l}^-[k]=\mathds{1}$.  For the initial place $s_1$ (which has no preset transitions), when its postset transitions $s_1^{\bullet}$ are fired once, all the tokens are removed to meet the final condition.  For all of the places $s \in S_{\cal A}$, when its preset transitions and postset transitions are fired once, the marking vector remains unchanged.  The execution of Alg. \ref{Alg:AoN-to-ACT} and \ref{Alg:ACT-to-OperandNet} guarantees that each row of $M_l=M^+-M^-$ corresponding to the places $s \in S_{\cal A}$ sums to zero.  Similarly, for all of the places $s \in S_{\cal R}$, in the case of renewable operands, again the marking vector remains unchanged.  In contrast, in the case of non-renewable operands, the marking vector is driven down to zero as a final condition.  
\item For Eq. \ref{Eq:RCPSP_constraint2}, it is guaranteed by the state transition function in Eq. \ref{Eq:OperandNet-STF1_HFNMCF_RCPSP}, the duration constraint in Eq. \ref{Eq:OperandNetDurationConstraint_HFNMCF_RCPSP}, the initial condition in Eq. \ref{Eq:InitCond_HFNMCF_RCPSP} and the non-negativity constraint in Eq. \ref{Eq:Capacity_HFNMCF_RCPSP}.  First, algebraically rearrange Eq. \ref{Eq:OperandNet-STF1_HFNMCF_RCPSP} and rewrite it in scalar form using place index $\cal l$ and transition indices ${\cal i}$ and ${\cal j}$.  Substituting in Eq. \ref{Eq:OperandNetDurationConstraint_HFNMCF_RCPSP} yields:  
\begin{align}
\sum_{\cal i}{M}_{{\cal li}l}^+U_{{\cal i}l}^-[k+d_{\cal i}] - \sum_{\cal j}{M}_{{\cal lj}l}^-U_{{\cal j}l}^-[k] =& Q_{{\cal l}l}[k+1]-Q_{{\cal l}l}[k] \quad \forall k \in \{1, \dots, K\},\forall l \in \{1,\ldots,|S_{\cal A}|\}
\end{align}
Next, note that Eq. \ref{Eq:InitCond_HFNMCF_RCPSP} imposes $Q_{sl}[1]=0 \forall s_{\cal l} \in S_{\cal A}$, and Eq. \ref{Eq:Capacity_HFNMCF_RCPSP} imposes non-negativity on these same places.  Therefore, $U_{{\cal j}l}^-[k]$ must occur at the same time or after $U_{{\cal i}l}^+[k+d_{\cal i}]$ to prevent non-negativity on $Q_{{\cal l}l}[k]$.  
\item For Eq. \ref{Eq:RCPSP_constraint3}, it is guaranteed by the state transition function in Eq. \ref{Eq:OperandNet-STF1_HFNMCF_RCPSP}, the initial condition in Eq. \ref{Eq:InitCond_HFNMCF_RCPSP}, and the non-negativity lower bound in Eq. \ref{Eq:Capacity_HFNMCF_RCPSP}.  Combining Eqs. \ref{Eq:OperandNet-STF1_HFNMCF_RCPSP}, \ref{Eq:InitCond_HFNMCF_RCPSP} and Eq. \ref{Eq:Capacity_HFNMCF_RCPSP} and then rewriting in scalar form yields:  
\begin{align}
\sum_{\cal j}{M}_{{\cal lj}l}^-U_{{\cal j}l}^-[k] \leq & C_{\cal lRr} \quad \forall k \in \{1, \dots, K\},\forall l \in \{1,\ldots,|S_{\cal R}|\}
\end{align}
which is equivalent to Eq. \ref{Eq:RCPSP_constraint3} when the operand net has been constructed via Algs. \ref{Alg:AoN-to-ACT} and \ref{Alg:ACT-to-OperandNet}.  Note that the ${M}_{l}^+U_{l}^+[k]$ terms do not affect on the nonnegativity constraint because they only have a positive effect on the state transition function.   
\item For Eq. \ref{Eq:RCPSP_constraint4}, it is guaranteed by the state transition function in Eq. \ref{Eq:OperandNet-STF1_HFNMCF_RCPSP}, the initial condition in Eq. \ref{Eq:InitCond_HFNMCF_RCPSP}, and the non-negativity lower bound in Eq. \ref{Eq:Capacity_HFNMCF_RCPSP}.  Recognizing that in the non-renewable operands case, ${M}^+_{{\cal lj}l}=0 \forall j \in \{1, \dots, |{\cal E}_l|\},\forall l \in \{1,\ldots,|S_{\cal R}|\}$, Eq. \ref{Eq:RCPSP_constraint4} collapses to the non-negativity constraint on $Q_{{\cal l}l}[k] \forall l \in \{1,\ldots,|S_{\cal R}|\}$.
\item Taking account of all of the constraints in Eqs. \ref{Eq:OperandNet-STF1_HFNMCF_RCPSP} - \ref{Eq:Capacity_HFNMCF_RCPSP} reveals that there are no additional (unused) constraints that could potentially yield an optimal solution on $U^{-*}_{l}[k]$ that is more constrained than the optimal solution ${\cal x}_{{\cal i}k}^*$ provided by Eqs. \ref{Eq:RCPSP_Objective}--\ref{Eq:RCPSP_constraint4}.  
\item Because ${\cal x} \subset X$, Eqs. \ref{Eq:ObjFunc_HFNMCF_RCPSP}--\ref{Eq:Capacity_HFNMCF_RCPSP} are a formal generalization of Eqs. \ref{Eq:RCPSP_Objective}--\ref{Eq:RCPSP_constraint4}.  
\end{itemize}
\end{proof}
\noindent Consequently, Thm. \ref{Thm:HFNMCF-RCPSP} proves that the RCPSP specialization of the HFNMCF problem can be used for at least all the problem instances of these two variants of the RCPSP.  

Thm. \ref{Thm:HFNMCF-RCPSP} proves that the RCPSP specialization of the HFNMCF problem is more general than the RCPSP because ${\cal x} \subset X$.  While it may perhaps appear computationally undesirable to have more (integer) decision variables $X$ (in the RCPSP specialization of the HFNMCF problem) than the decision variable ${\cal x}$ (in the RCPSP), in reality, there is little to no practical impact on the computational performance of one optimization program over the other.  The results presented in Sec. \ref{Sec:Solution_HFNMCF_RCPSP} showed that $ U^{-}_{xl}[k] \;\equiv\; {\cal x}_{{\cal i}k} \forall x,k$, $U_l^{+}[k]$ is stated implicitly in Eq. \ref{Eq:RCPSP_constraint2}, $Q^{*}_{{\cal E}l}[k]$ is stated implicitly in the quantity $\sum_{\kappa=k-d_i}^{\kappa=k} {\cal x}_{{\cal i}\kappa} \forall k \in \{1, \dots, K\}$, and $Q_{S_l}[k]$ is stated implicitly in the slack variables associated with Eq. \ref{Eq:RCPSP_constraint3}.  Consequently, an efficient solution algorithm for the RCPSP specialization of the HFNMCF would collapse these explicit decision variables into implicit statements as part of a pre-solve routine, incurring negligible computational expense.  

Meanwhile, the inclusion of these explicit decision variables means that one of the primary advantages of the RCPSP specialization of the HFNMCF problem is that it has an \emph{explicit} and \emph{complete} description of the project state, including: 
\begin{enumerate}
\item when project activities are started in $U_{l}^{-}$, 
\item when project tasks are ongoing $Q^{*}_{{\cal E}l}$, 
\item when project tasks are completed $U_{l}^{+}$, and 
\item when the preconditions for project activities have not been fully met $Q_{S_l}$ -- leading to wait times between activities.   
\end{enumerate}
This complete description of the project state provides enhanced opportunities for project monitoring and control.   For example, it may be necessary to place exogenous ``blackout periods" on $Q^{*}_{{\cal E}l}$ where a given task may not occur.  Similarly, it may be necessary to place exogenous deadlines on the completion times of certain project activities in $U_{l}^{+}$ (rather than the project as a whole).  Moreover, it may be necessary to place ``use it lose it" expiration constraints on specific operands through $Q_{S_l}$.  Such real-world extensions of the RCPSP specialization of the HFNMCF problem are more easily introduced with an explicit description of the project state than when it remains implicit.  Finally, the explicit description of the project state lends itself to integration with well-established project monitoring and control methods, such as earned value analysis (EVA) \cite{hazir:2015:00, das:2025:00} and the earned schedule \cite{vandevoorde:2006:00}.  For example, the schedule variance (SV) compares the (actually) earned value (EV or budgeted cost of work performed) to the planned value (PV or the budgeted cost of work scheduled).
\begin{align}
SV &= EV - PV \\ 
   &= \sum_{k=1}^{k=K} V_{U^+}^TU_{l}^{+\ddagger}[k] - V_{U^+}^TU_{l}^{+*}[k]
\end{align}
where it is assumed that value only accrues when a project activity has been \emph{completed}, $V_{U^+}$ is the value vector associated with each project activity, the superscript $()^\ddagger$ denotes the project activity as it is \emph{actually} executed, and the superscript $()^*$ denotes the project as it is \emph{optimally} planned.  Similarly, the scheduled performance index is $ SPI = EV/PV$.  In all, the explicit description of the project state provides a straightforward basis for more advanced methods of project monitoring and control.  

Another advantage of the RCPSP specialization of the HFNMCF problem is that it can describe both the renewable operand and non-renewable operand cases with the \emph{same} optimization program formulation (albeit with different parameter values).  Fig. \ref{fig:OperandNet} shows that the case of renewable operands has return arcs back to an operand net place, while the case of non-renewable operands does not.  The structure of the RCPSP specialization for the HFNMCF problem remains unchanged.  Instead, the topology of the operand net, as exhibited by its positive and negative incidence matrices $M_l^+, M_l^-$ as constant parameter matrices, changes.  This means the same solution algorithm may be used for both variants of the problem.  Perhaps more importantly, the RCPSP specialization of the HFNMCF problem shows that the RCPSP taxonomy, where $\alpha_3=1$ (for renewable operands), $\alpha_3=T$ (for non-renewable operands), $\alpha_3=1T$ (for a mix of renewable and non-renewable operands), is either inadequate or unnecessary.  One may consider the $\alpha_3$ taxonomy inadequate because one may conceive many more operand net topologies that govern the constrained evolution of operands than what can be described by the $\alpha_3 \in \{o, 1, T, 1T, v, \chi, \sigma\}$ parameter\cite{demeulemeester:2002:00, Brucker:2007:00}.  In another respect, the $\alpha_3$ taxonomy may be considered unnecessary because the operand net and its state transition function provide a higher-level systems thinking abstraction that describes a family of potential operand behaviors that is more descriptive than what can be achieved by a taxonomy of parameter values (e.g., $\alpha_3$).  Indeed, a recent review of the RCPSP (and its many variants) \cite{artigues:2025:00} recognizes that there is a ``scheduling zoo" brought on by the current taxonomy, and there may exist a need to overcome it.  The reconciliation presented in Sec. \ref{Sec:Reconciliation} shows that a relatively small number of definitions govern the  RCPSP specialization of the HFNMCF problem and that many branches of the RCPSP taxonomy can be accommodated flexibly within the topology of the operand net.  

Perhaps of even greater interest is the potential application of the (general) HFNMCF problem (in Eq. \ref{Eq:HFNMCF_ObjFunc}-\ref{Eq:HFNMCF_DeviceModels2}) to the scheduling of mega-projects.  In a recent review, Denicol et al. emphasize that mega-projects stand up dedicated enterprises for carrying out projects of ever-greater scale\cite{denicol:2020:00, Hosseini:2025:ISC-J55}.  In this respect, the RCPSP taxonomy is noticeably silent.  The imposition of no structural resources ($\alpha_1=\emptyset$) throughout the entirety of the RCPSP taxonomy renders it unable to characterize the nature of these enterprises and the complexities they bring to the planning and execution of mega-projects.  The broader machine scheduling taxonomy\cite{demeulemeester:2002:00,Brucker:2007:00}, which allows for structural resources, is therefore more relevant to the scheduling of mega-projects.   However, the proliferation of a parameter-based machine scheduling taxonomy may prove to be unwieldy, just as the parameter-based RCPSP taxonomy was found to be inadequate or unnecessary in the previous paragraph.  Instead, mega-project enterprises are likely to be characterized by a wide variety of formal and functional structures and aggregations that require a plurality of higher-level systems thinking abstractions.  In this respect, and as highlighted in Sec. \ref{Sec:MBSEIntro}, MBSE offers a promising approach to model, analyze, and operationalize enterprise architectures for mega-project management\cite{zhang2025mbse,sitton:2018:00, atencio:2022:00, Hosseini:2025:ISC-J55}.  From there, the SysML block definition diagrams (BDDs) and activity diagrams (ACTs) can be translated with HFGT into the HFNMCF problem.  

The HFNCMCF problem in Eqs. \ref{Eq:HFNMCF_ObjFunc}-\ref{Eq:HFNMCF_DeviceModels2} has several distinguishing features that make it particularly promising for the scheduling of mega-projects.  First, it distinguishes between the state evolution of the operand (i.e., project) (in Eq. \ref{Eq:HFNMCF_OperandNet-STF1},\ref{Eq:HFNMCF_OperandNet-STF2}) from the state evolution of the engineering system (i.e., enterprise) (in Eq. \ref{Eq:HFNMCF_ESN-STF1}, \ref{Eq:HFNMCF_ESN-STF2}) tasked with evolving that project forward.  It recognizes that the states of the operand $(Q_{Sl_i}[k],Q_{{\cal E}l_i}[k],U_{l_i}^+[k],U_{l_i}^-[k])$ are distinct from the states of the engineering system $(Q_{B}[k],Q_{\cal E}[k],U^+[k],U^-[k])$ although they are synchronized in a many-to-many relationship (in Eqs. \ref{Eq:HFNMCF_SyncPlus},\ref{Eq:HFNMCF_SyncMinus}).  One may envision a mega-project enterprise that possesses numerous capabilities (as in Defn. \ref{Defn:Capability}) that appear as multiple engineering system net transitions that are all synchronized with a specific operand net transition.  This plurality of capabilities can occur because the enterprise has invested in multiple resources (as in Defn. \ref{Defn:SystemResource}) that can carry out the same process (e.g., two workers with the same skill set).  Alternatively, the plurality of capabilities can occur because the enterprise has two different processes (or methods) for completing the same project task (e.g., paper-based processing vs. software-based processing).  Additionally, one may envision a mega-project enterprise that has multiple layers of centralized, distributed, and decentralized decision-making capabilities (in the engineering system net) that are required as preconditions for the (value-adding) capabilities that actually evolve a project forward.  Indeed, in many mega-projects, it is these decision-making capabilities (e.g., licensing and permitting) that most affect the project's ultimate delivery time.  Naturally, each of these capabilities is likely to have durations for its execution.  Consequently, the HFNMCF problem recognizes that the duration of an operand net transition (in Eq. \ref{Eq:HFNMCF_OperandNetDurationConstraint}) (e.g., the expiration of perishable goods) is distinct from the duration of an engineering system net transition (in Eq. \ref{Eq:HFNMCF_DurationConstraint}) (e.g., the delivery time of perishable goods).  Additionally, it recognizes that engineering systems and their operands, by default, are open systems with potentially many exogenous constraints (in Eqs. \ref{Eq:HFNMCF_ESN-Exogenous},\ref{Eq:HFNMCF_OperandNet-Exogenous}).  In all, the HFNCMCF problem presents ample opportunities to explore the scheduling of mega-projects, taking into account the complexities of the mega-project's enterprise architecture.

\section{Conclusion}\label{Sec:Conclusion}
This paper reconciles the resource-constrained project scheduling problem (RCPSP) with model-based systems engineering (MBSE) and hetero-functional graph theory (HFGT) by (i) constructing a concrete translation pipeline from an Activity-on-Node (AoN) network to a SysML activity diagram (ACT) and then to an operand net (Algs.~\ref{Alg:AoN-to-ACT},\ref{Alg:ACT-to-OperandNet}), and (ii) specializing the hetero-functional network minimum-cost flow (HFNMCF) problem to the RCPSP context, as the systematic means of HFGT for quantitative analysis. These steps establish a common, state-based language that makes the relationships among RCPSP, MBSE, and HFGT precise. The theoretical result proves that the RCPSP specialization of the HFNMCF problem is a \emph{formal generalization} of the RCPSP. The proof aligns indices and decision variables, shows objective equivalence, and demonstrates that the operand-net state transition, duration, initial/final, and capacity conditions \emph{necessitate} the RCPSP precedence and resource relations. Since ${\cal x}\subset X$, every RCPSP instance is representable and solvable as an HFNMCF specialization, while the specialization admits real-world extensions not expressible in the classic RCPSP without altering the model structure.

The other key methodological contribution is the \emph{explicit and complete} project-state description delivered by the operand-net variables: (i) operand net input firing vector, (ii) operand net transition marking vector, (iii) operand net output firing vector, and (iv) operand net place marking vector. The numerical results verify that optimum operand net input firing vector is equivalent to RCPSP optimum decision variables (activity start times) and that operand net transition marking vector($Q^{*}_{{\cal E}l}$) and operand net place marking vector ($Q_{S_l}$) recover, respectively, the task-on-going profile and inter-activity slack that RCPSP only encodes \emph{implicitly}. This explicit state directly supports richer monitoring and control and integrates naturally with earned value and earned schedule measures defined on operand net output firing vector ($U_{l}^{+}$).

From the computational efficiency perspective, while the specialization introduces more explicit integer variables than the classic RCPSP, efficient pre-solve can collapse operand net output firing vectors ($U_{l}^{+}$), operand net transition marking vectors ($Q^{*}_{{\cal E}l}$), and operand net place marking vectors ($Q_{S_l}$) to implicit expressions when desired, preserving competitiveness with standard RCPSP solvers while retaining the ability to ``turn on'' explicit state for monitoring and control stage. From the structural perspective, the same optimization program captures both renewable and non-renewable operands; the distinction is carried by the operand-net topology and operand net incidence matrices. This unification renders portions of the traditional $\alpha_3$ taxonomy unnecessary for modeling operand behavior. Rather than proliferating parameter classes in a ``scheduling zoo'', the operand-net abstraction compacts many variants into topology and state-transition semantics, enabling one solver and one model interface to cover a much broader family of constraints through data edits rather than reformulation.

Beyond single projects, the general HFNMCF program is naturally suited to mega-projects, while the classic RCPSP is silent due to the imposition of no structural resources. HFNMCF program separates (and synchronizes) the state evolution of the \emph{operand} (the project) from that of the \emph{engineering system} (the dedicated enterprise executing it). This separation captures several capabilities, alternative methods, layered decision processes in the mega-project enterprise, and exogenous constraints typical of enterprise architectures. The MBSE specifies these structures at the design level; HFGT compiles them into a quantitative framework, creating a coherent digital thread from architecture to executable schedule and decisions as previously claimed in \cite{Hosseini:2025:ISC-J55}. In summary, by proving equivalence and demonstrating generalization, this work reframes the RCPSP as a special case of a broader HFNMCF model that preserves RCPSP's strengths while extending it for enabling richer monitoring, unifying operand behaviors, and opening a principled path from enterprise architecture to executable, detail-oriented decisions and plans for delivery of complex megaprojects.

\printnomenclature

\appendix 
For the case of non-renewable resources, the RCPSP specialization of HFNMCF problem found in Eq. \ref{Eq:ObjFunc_HFNMCF_RCPSP}--\ref{Eq:Capacity_HFNMCF_RCPSP} is solved to optimality.  The optimal values of the column vectors $U_l^{-*}[k]$ and $U_l^{+*}[k]$ are horizontally concatenated and presented below.  
\begin{footnotesize}
\begin{align}\label{Eq:OptimalU2}
U_{l}^{-*} =
\left[
\begin{array}{*{17}{r|}r}
1 &   &   &   &   &   &    &    &    &    &    &    &    &    &    &    &   &  \\\hline
1 &   &   &   &   &   &    &    &    &    &    &    &    &    &    &    &   &  \\\hline
  &   & 1 &   &   &   &    &    &    &    &    &    &    &    &    &    &   &  \\\hline
1 &   &   &   &   &   &    &    &    &    &    &    &    &    &    &    &   &  \\\hline
  &   &   &   &   & 1 &    &    &    &    &    &    &    &    &    &    &   &  \\\hline
  &   &   &   & 1 &   &    &    &    &    &    &    &    &    &    &    &   &  \\\hline
  &   &   &   &   &   &    & 1  &    &    &    &    &    &    &    &    &   &  \\\hline
  &   &   &   &   &   &    &    &    &    &    & 1  &    &    &    &    &   &  \\\hline
  &   &   &   &   &   &    &    &    &    &    &    &    & 1  &    &    &   &  \\
\end{array}
\right],
U_{l}^{+*} =
\left[
\begin{array}{*{17}{r|}r}
  &   & 1 &   &   &   &    &    &    &    &    &    &    &    &    &    &   &  \\\hline
  &   &   &   &   &   &    & 1  &    &    &    &    &    &    &    &    &   &  \\\hline
  &   &   &   &   & 1 &    &    &    &    &    &    &    &    &    &    &   &  \\\hline
  &   &   &   & 1 &   &    &    &    &    &    &    &    &    &    &    &   &  \\\hline
  &   &   &   &   &   &    &    &    &    &    &    &    &  1 &    &    &   &  \\\hline
  &   &   &   &   &   &    &    &    &    &  1 &    &    &    &    &    &   &  \\\hline
  &   &   &   &   &   &    &    &    &    &    &  1 &    &    &    &    &   &  \\\hline
  &   &   &   &   &   &    &    &    &    &    &    &    &  1 &    &    &   &  \\\hline
  &   &   &   &   &   &    &    &    &    &    &    &    &  1 &    &    &   &  \\
\end{array}
\right]
\end{align}
\end{footnotesize}
\noindent Additionally, the optimal value $Q^{*}_{{\cal E}l[k]}$ can be calculated after the optimization via Eq. \ref{Eq:HFNMCF_OperandNet-STF2}.  
\begin{footnotesize}
\begin{align}
Q^{*}_{{\cal E}l}[k] =
\left[
\begin{array}{*{18}{r|}r}
1 & 1 &   &   &   &   &    &    &    &    &    &    &    &    &    &    &   &  &   \\\hline
1 & 1 & 1 & 1 & 1 & 1 & 1  &    &    &    &    &    &    &    &    &    &   &  &   \\\hline
  &   & 1 & 1 & 1 &   &    &    &    &    &    &    &    &    &    &    &   &  &   \\\hline
1 & 1 & 1 & 1 &   &   &    &    &    &    &    &    &    &    &    &    &   &  &   \\\hline
  &   &   &   &   & 1 & 1  & 1  & 1  & 1  & 1  & 1  & 1  &    &    &    &   &  &   \\\hline
  &   &   &   & 1 & 1 & 1  & 1  & 1  & 1  &    &    &    &    &    &    &   &  &   \\\hline
  &   &   &   &   &   &    & 1  & 1  & 1  & 1  &    &    &    &    &    &   &  &   \\\hline
  &   &   &   &   &   &    &    &    &    &    & 1  & 1  &    &    &    &   &  &   \\\hline
  &   &   &   &   &   &    &    &    &    &    &    &    &    &    &    &   &  &   \\
\end{array}
\right]
\end{align}
\end{footnotesize}
\noindent Finally, the optimal values of $Q^{*}_{Sl}[k]$ are horizontally concatenated and presented below.  
\begin{footnotesize}
\begin{align}\label{Eq:OptimalQ2}
Q_{S_l} =
\left[
\begin{array}{*{18}{r|}r}
3 &   &   &   &   &   &   &   &   &   &   &   &   &   &   &   &   &   &  \\\hline
25& 16& 16& 12& 12& 10& 7 & 7 & 4 & 4 & 4 & 4 &   &   &   &   &   &   &   \\\hline
  &   &   &   &   &   &   &   &   &   &   &   &   &   &   &   &   &   &   \\\hline
  &   &   &   &   & 1 & 1 &   &   &   &   &   &   &   &   &   &   &   &   \\\hline
  &   &   &   &   &   &   &   &   &   &   &   &   &   &   &   &   &   &   \\\hline
  &   &   &   &   &   &   &   &   &   &   &   &   &   &   &   &   &   &   \\\hline
  &   &   &   &   &   &   &   &   &   &   &   &   &   &   &   &   &   &   \\\hline
  &   &   &   &   &   &   &   &   &   &   &   &   &   &   &   &   &   &   \\\hline
  &   &   &   &   &   &   &   &   &   &   &   &   &   &   &   &   &   &   \\\hline
  &   &   &   &   &   &   &   &   &   &   &   &   &   &   &   &   &   &   
\end{array}
\right]
\end{align}
\end{footnotesize}

\bibliographystyle{IEEEtran}
\bibliography{LIINESLibrary,LIINESPublications,RCPSP_library}

\end{document}